\documentclass[a4paper,UKenglish]{lipics-v2019}
\usepackage{latexsym}
\usepackage{amsmath}
\usepackage{verbatim}
\usepackage{xcolor}
\usepackage{enumerate}
\usepackage{amssymb}
\usepackage{bbm}
\usepackage{algpseudocode}
\usepackage{graphicx}
\usepackage{textcomp}
\usepackage[ruled,vlined,algosection,linesnumbered]{algorithm2e}


\usepackage{tikz}
\usetikzlibrary{decorations.shapes,decorations.pathreplacing,decorations.pathmorphing}
\usetikzlibrary{arrows,matrix,shapes,snakes}
\usetikzlibrary{positioning}
\tikzset{
        stars/.style={star,inner sep=2pt}
    }

\usepackage{epsfig}
\usepackage{amsfonts}
\usepackage{todonotes}
\usepackage[inline]{enumitem}
\usepackage{longtable}
\usepackage{url}
\usepackage{tabularx}
\usepackage{wasysym}
\usepackage{xspace}
\usepackage{varwidth}
\usepackage{stmaryrd}

\usepackage{hyphenat}



\newcommand{\NPC}{\textsf{NP}-complete\xspace}






\newcommand{\NPH}{\textsf{NP}-hard\xspace}



\theoremstyle{plain}
\newtheorem{observation}[theorem]{Observation}




\hypersetup{
hidelinks,
colorlinks=true,
citecolor=[rgb]{1 0 0},
linkcolor=[rgb]{1 0 0},
urlcolor=[rgb]{1 0 0}
}

\usepackage{thmtools, thm-restate,hyperref}
\newcommand{\old}[1]{{}}
 \usepackage{framed,xspace}
 \newcommand{\stree}{{\sc Steiner Tree}\xspace}
 \newcommand{\vertconst}{$24$ \xspace}
\newcommand{\edgeconst}{$24$ \xspace}
\newcommand{\cellconst}{$24$ \xspace}
\newcommand{\pwidth}{$7$ \xspace}
\newcommand{\pwidthtwo}{$6$ \xspace}
\newcommand{\exst}{{\sc Exact Steiner Tree}\xspace}
\newcommand{\compconst}{$24$ \xspace}

\hideLIPIcs
 \definecolor{shadecolor}{gray}{0.9}
 

\usepackage{microtype}


\bibliographystyle{plainurl}

\title{Parameterized Study of \stree on Unit Disk Graphs}

\titlerunning{\stree on Unit Disk Graphs}

\author{Sujoy Bhore}{Algorithms and Complexity Group, TU Wien, Vienna,  Austria}{sujoy@ac.tuwien.ac.at}{https://orcid.org/0000-0003-0104-1659}{}
\author{Paz Carmi}{Ben-Gurion University of the Negev, Beersheba, Israel}{carmip@cs.bgu.ac.il}{https://orcid.org/0000-0003-0154-5013}{}
\author{Sudeshna Kolay}{Indian Institute of Technology Kharagpur, India}{skolay@cse.iitkgp.ac.in}{https://orcid.org/0000-0002-2975-4856}{}
\author{Meirav Zehavi}{Ben-Gurion University of the Negev, Beersheba, Israel}{meiravze@bgu.ac.il}{https://orcid.org/0000-0002-3636-5322}{}

\authorrunning{S. Bhore, P. Carmi, S. Kolay, and M. Zehavi}

\Copyright{Sujoy Bhore, Paz Carmi, Sudeshna Kolay and Meirav Zehavi}

\begin{CCSXML}
<ccs2012>
<concept>
<concept_id>10003752.10010061.10010063</concept_id>
<concept_desc>Theory of computation~Computational geometry</concept_desc>
<concept_significance>500</concept_significance>
</concept>
<concept>
<concept_id>10003752.10003809.10003635.10010038</concept_id>
<concept_desc>Theory of computation~Parameterized Algorithms</concept_desc>
<concept_significance>300</concept_significance>
</concept>
</ccs2012>
\end{CCSXML}

\ccsdesc[500]{Theory of computation~Computational geometry}
\ccsdesc[300]{Theory of computation~Parameterized Algorithms}


\keywords{Unit Disk Graphs; \textsf{FPT}; Subexponential exact algorithms; \textsf{NP}-Hardness; \textsf{W}-Hardness}


\funding{Research of Sujoy Bhore was supported by the Austrian Science Fund (FWF), grant P 31119. Research of Paz Carmi was partially supported by the Lynn and William Frankel Center for Computer Science and by Grant 2016116 from the United States-Israel Binational Science Foundation. Research of Meirav Zehavi was supported by the Israel Science Foundation (ISF) grant no. 1176/18 and United States - Israel Binational Science Foundation (BSF) grant no. 2018302.}

\acknowledgements{We are grateful to the anonymous reviewers for their helpful comments.}





\nolinenumbers

\EventEditors{}
\EventNoEds{2}
\EventLongTitle{
}
\EventShortTitle{}
\EventAcronym{}
\EventYear{}
\EventDate{}
\EventLocation{}
\EventLogo{}
\SeriesVolume{}
\ArticleNo{}

\begin{document}

\maketitle

\begin{abstract}
We study the \stree problem on unit disk graphs. 
Given a $n$ vertex unit disk graph $G$, a subset $R\subseteq V(G)$ of $t$ vertices and a positive integer $k$, the objective is to decide if there exists a tree $T$ in $G$ that spans over all vertices of $R$ and uses at most $k$ vertices from $V\setminus R$. 
The vertices of $R$ are referred to as \emph{terminals} and the vertices of $V(G)\setminus R$ as \emph{Steiner} vertices. First, we show that the problem is \NPH.
Next, we prove that the \stree problem on unit disk graphs can be solved in $n^{O(\sqrt{t+k})}$ time. We also show that the \stree problem on unit disk graphs parameterized by $k$ has an FPT algorithm with running time $2^{O(k)}n^{O(1)}$. In fact, the algorithms are designed for a more general class of graphs, called clique-grid graphs~\cite{fomin2019finding}. We mention that the algorithmic results can be made to work for \stree on disk graphs with bounded aspect ratio. Finally, we prove that \stree on disk graphs parameterized by $k$ is W[1]-hard. 
\end{abstract}

\section{Introduction}\label{sec:intro}
Given a graph $G$ with a weight function $w:E(G)\rightarrow \mathbb{R}^+$
and a subset $R\subseteq V(G)$ of vertices, a Steiner tree is an acyclic subgraph of $G$ 
spanning all vertices of $R$. The vertices of $R$ are usually referred to as \emph{terminals} and the vertices of $V(G)\setminus R$ as \emph{Steiner} vertices. 
The {\sc Minimum Steiner Tree} problem is to find a Steiner tree $T$ such the total weight of $E(T)$ is minimized. The decision version of this is the  \stree problem, where given a graph $G$, a subset $R\subseteq V(G)$ of vertices and a positive integer $k$, the objective is to determine if there exists a Steiner tree $T$ in $G$ for the \emph{terminal} set $R$ such that the number of \emph{Steiner} vertices in $T$ is at most $k$. The \stree problem is one of Karp's classic \NPC problems \cite{k-racp-72}; moreover, that makes the optimization problem \NPH. 

A special case of the {\sc Minimum Steiner Tree} problem is the \textsc{Metric Steiner Tree} problem. Given a complete graph $G=(V,E)$, each vertex corresponds to a point in a metric space, and for each edge $e\in E$ the weight $w(e)$ corresponds to the distances in the space. In other words, the edge weights satisfy the triangle inequality. It is well known that, given an instance of the non-metric Steiner tree problem, 
it is possible to transform it in polynomial time into an equivalent instance of the \textsc{Metric Steiner Tree} problem. Moreover, this transformation preserves the approximation factor \cite{vazirani2013approximation}. 
The {\sc Euclidean Steiner Tree} problem or {\sc Geometric Steiner Tree} problem takes as input $n$ points in the plane. The objective is to connect them by lines of minimum total length in such a way that any two points may be interconnected by line segments either directly or via other points and line segments. The {\sc Minimum Steiner Tree} problem is \NPH even in Euclidean or Rectilinear metrics \cite{garey1977rectilinear}. 

Arora \cite{arora1998polynomial} showed that the {\sc Euclidean Steiner Tree}
and {\sc Rectilinear Steiner Tree} problems can be efficiently approximated arbitrarily close to the optimal. Several approximation schemes have been proposed over the years on {\sc Minimum Steiner Tree} for graphs with arbitrary weights \cite{berman1994improved, borchers1997thek, karpinski1997new, promel2000new}.  
Although the Euclidean version admits a PTAS, 
it is known that the \textsc{Metric Steiner Tree} problem is APX-complete. There is a polynomial-time algorithm that approximates the minimum Steiner tree to within a factor of $\ln(4)+\epsilon \approx 1.386$ \cite{chlebikova2008steiner}; however, approximating within a factor $\frac{96}{95} \approx 1.0105$ is \NPH \cite{berman20091}. 

The decision version, \stree is well-studied in parameterized complexity. A well-studied parameter for the \stree is the number of terminals $t = |R|$.
It is known that the \stree is FPT for this parameter 
due to the classical result of Dreyfus and Wagner \cite{dreyfus1971steiner}. Fuchs et al.~\cite{FuchsKMRRW07} and Nederlof~\cite{Nederlof13} gave alternative algorithms for \stree parameterized by $t$ with running times that are not comparable with the Dreyfus and Wagner algorithm. On the other hand, \stree parameterized by the number of Steiner vertices $k$ is W[2]-hard~\cite{downey2012parameterized}. 
Hence, the focus has been on designing parameterized algorithms for graph subclasses like planar graphs \cite{jones2013parameterized}, $d$-degenerate graphs \cite{suchy2017extending}, etc. 
In \cite{dvovrak2017parameterized}, Dco\v{r}\'{a}k et al. designed an efficient parameterized approximation scheme (EPAS) for the \stree parameterized by $k$ \footnote{For any $\epsilon>0$ computes a $(1+\epsilon)$ approximation in time $f(p,\epsilon)\times n^{O(1)}$ for a computable function $f$ independent of $n$.}. 

In this paper, we study the \stree problem on unit disk graphs when the parameter is the number of Steiner vertices $k$. Unit disk graphs are the geometric intersection graphs of unit circles in the plane. That is, given $n$ unit circles in the plane, we have a graph $G$ where each vertex corresponds to a circle such that there is an edge between two vertices when the corresponding circles intersect. Unit disk graphs have been widely studied in computational geometry and graph algorithms due to their usefulness in many real-world problems, e.g., optimal facility location \cite{wang1988study}, wireless and sensor networks; see \cite{hale1980frequency, kammerlander1984c}. These led to the study of many \NPC problems on unit disk graphs; see \cite{clark1991unit, dumitrescu2011minimum}.

There are some works on variants of {\sc Minimum Steiner Tree} on unit disk graphs in the approximation paradigm. Li et al. \cite{li2009ptas} studied node-weighted Steiner trees on unit disk graphs, and presented a PTAS when the given set of vertices is $c$-local. Moreover, they used this to solve the node-weighted connected dominating set problem in unit disk graphs and obtained a 
$(5+\epsilon)$-approximation algorithm. In \cite{biniazfull}, Biniaz et al. studied the {\sc Full Steiner Tree}\footnote{A full Steiner tree is a Steiner tree which has all the terminal vertices as its leaves} problem on unit disk graphs. They presented a $20$-approximation algorithm for this problem, and for $\lambda$-precise graphs gave a $(10+\frac{1}{\lambda})$-approximation algorithm where $\lambda$ is the length of the longest edge. Although there have been a plethora of work on variants of the {\sc Minimum Steiner Tree} problem on unit disk graphs in approximation algorithms, hardly anything is known in parameterized complexity for the decision version. 
In this regard, we refer to the work of Marx et al.~\cite{marx2018subexponential}
who investigated the parameterized complexity of the {\sc Minimum Steiner Tree} problem on planar graphs, where the number of terminals ($k$) is regarded as the parameter. They have designed an $n^{O(\sqrt{k})}$-time exact algorithm,
and showed that this problem on planar graphs
cannot be solved in time $2^{o(k)}\cdot n^{O(1)}$, assuming ETH. However, these results do not directly apply on unit disk graphs as unit disk graphs can contain very large cliques, but, then planar graphs contains arbitrarily large stars. Recently, Berg et al.~\cite{de2018framework} showed that the \stree problem can be solved in $2^{O(n^{1-\frac{1}{d}})}$ time on intersection graphs of $d$-dimensional similarly-sized fat objects, for some $d \in \mathbb{Z}_{+}$.

More often than not, the geometric intersection graph families such as unit disk graphs, unit square intersection graphs, rectangle intersection graphs, provide additional geometric structure that helps to generate algorithms. In this paper, our objective is to understand parameterized tractability landscape of the \stree problem on unit disk graphs.   

\paragraph*{Our Results.}
First in Section~\ref{sec:nph}, we show that \stree on unit disk graphs is \NPH. Then, in Section~\ref{sec:subexp}, we design a subexponential algorithm for the \stree problem on unit disk graphs parameterized by the number of terminals $t$ and the number of Steiner vertices $k$.

\begin{restatable}
{theorem}{subexp}\label{thm:subexp-UDG}
\stree on unit disk graphs can be solved in $n^{O(\sqrt{t+k})}$ time.
\end{restatable}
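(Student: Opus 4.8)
The plan is to exploit the geometric structure of $G$ through a \emph{clique-grid representation}, following the framework of Fomin et al.~\cite{fomin2019finding}, and then to reduce the problem to dynamic programming over a region of small treewidth, in the spirit of the planar Steiner tree algorithm of Marx et al.~\cite{marx2018subexponential}. Concretely, I would first compute in polynomial time an assignment of the vertices of $G$ to the cells of an integer grid such that (i)~any two vertices lying in the same cell are adjacent, so each cell induces a clique, and (ii)~an edge of $G$ can only join vertices whose cells are within a fixed constant $L_\infty$-distance. Unit disk graphs admit such a representation with all the constants absolute, which is precisely what makes the argument localize.

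Next, I would combine a counting observation with the grid structure. Any candidate solution tree $T$ spans the $t$ terminals and uses at most $k$ Steiner vertices, so $|V(T)| \le t+k$ and $T$ occupies at most $t+k$ cells; since $T$ is connected and, by property (ii), its edges only join cells at bounded distance, the set $C$ of occupied cells is connected in the grid-like cell-adjacency graph. The cell-adjacency graph is a subgraph of a bounded-degree grid, and a classical consequence of the planar/grid separator theorem is that every connected set of $m$ cells induces a subgraph of treewidth $\mathcal{O}(\sqrt{m})$. With $m = t+k$, the region hosting the solution therefore admits a tree decomposition of width $w = \mathcal{O}(\sqrt{t+k})$ whose bags are sets of $\mathcal{O}(\sqrt{t+k})$ cells.

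Finally, I would run a Steiner-tree dynamic program over such a tree decomposition, using the clique structure to keep each bag's interface small. The point is that within any single cell all vertices are mutually adjacent, so for the purpose of connectivity across a separator each active cell needs to expose only a constant number of ``portal'' vertices, together with the partition recording how the portals of a bag are joined by the partial tree built so far, which terminals have been absorbed, and how many Steiner vertices have been spent. Since a bag touches $\mathcal{O}(\sqrt{t+k})$ cells and for each we guess its portal vertices from among the $n$ candidates, the number of states per bag is $n^{\mathcal{O}(\sqrt{t+k})}$; the standard introduce/forget/join transitions, with the usual connected-subgraph bookkeeping that forbids premature cycles, then yield total time $n^{\mathcal{O}(\sqrt{t+k})}$. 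Because the hosting region $C$ is not known a priori, I would not fix it in advance: either one folds the choice of which cells are used into the DP state, or, equivalently, one performs balanced-separator divide-and-conquer in the cell grid, guessing at each step a separator of $\mathcal{O}(\sqrt{t+k})$ cells together with its portals and crossing pattern and recursing on the two sides.

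The main obstacle I anticipate is exactly this interaction between the clique structure and the connectivity/accounting maintained by the DP: one must argue rigorously that a bounded number of portals per cell suffices to represent every way an optimal tree can cross a separator, while still charging each Steiner vertex exactly once and certifying that the reassembled subgraph is a single tree rather than a disconnected or cyclic forest. Making the treewidth bound apply to the \emph{unknown} solution region, rather than to a fixed and possibly enormous bounding box of the whole grid, is the second delicate point, and it is what forces either the separator-guessing formulation or the incorporation of cell selection into the state.
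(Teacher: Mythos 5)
Your clique-grid setup and the dynamic program you sketch (constantly many ``portal''/Steiner vertices per cell, a partition over the active cells' interface recording connectivity, a counter for Steiner vertices spent, $n^{O(\sqrt{t+k})}$ states per bag) match the paper's DP over a nice clique path decomposition quite closely. The genuine gap is in how you obtain a width-$O(\sqrt{t+k})$ decomposition in the first place. You correctly observe that the (unknown) solution occupies at most $t+k$ cells forming a connected set of treewidth $O(\sqrt{t+k})$ in the cell-adjacency graph, and you correctly flag that making this usable is ``the second delicate point'' --- but neither of your two proposed fixes closes it. Folding cell selection into the DP state does not work as stated: a tree-decomposition DP needs the decomposition as input, and if the host region is unknown there is no decomposition to iterate over; enumerating connected cell sets of size $t+k$ costs $n^{\Theta(t+k)}$, not $n^{O(\sqrt{t+k})}$. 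The separator-guessing divide-and-conquer is also not well-defined as written: a balanced separator of the \emph{solution's} cell set does not separate the ambient $n\times n$ grid, so after guessing the separator and its portals the ``two sides'' on which you recurse are still unknown subsets of the whole grid. Making this recursion rigorous is essentially the content of the Marx et al.\ planar Steiner tree algorithm and requires substantial additional machinery (embedding-based separators of the solution tree itself); it is not a routine step one can defer.

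The paper avoids localization entirely via a Baker-style shifting argument, which is the idea your proposal is missing. Label the two-column blocks of the grid by residues modulo $\sqrt{t+k}$; since the solution has at most $t+k$ vertices, some label class meets it in at most $\sqrt{t+k}$ vertices. Guess those vertices ($n^{O(\sqrt{t+k})}$ choices, forming the good family of \exst instances) and delete all other non-terminals in the columns of that class. The \emph{entire remaining graph} --- not just the solution's neighborhood --- then decomposes into strips of $2\sqrt{t+k}$ consecutive columns and admits an $O(\sqrt{t+k})$-clique path decomposition after re-inserting the guessed set into every bag. Your DP then runs on this known decomposition with no need to identify the solution's region. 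I would recommend replacing your separator step with this shifting step; the rest of your argument then goes through in essentially the paper's form.
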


The approach to design this subexponential algorithm is very similar to that used in~\cite{fomin2019finding}. First, we apply a Baker-like shifting strategy to create a family $\mathcal{F}$ of instances (of \exst, which is a variant of \stree) such that if the input instance $(G,R,t,k)$ is a yes-instance then there is at least one constructed instance in $\mathcal{F}$ that is a yes-instance of \exst. On the other hand, if $(G,R,t,k)$ is a no-instance of \stree, then no instance of $\mathcal{F}$ is a yes-instance of \exst. With the knowledge that the answer is preserved in the family $\mathcal{F}$, we design a dynamic programming subroutine to solve \exst on each of the constructed instances of $\mathcal{F}$. 

Next, in Section~\ref{sec:FPT}, we show that the \stree on unit disk graphs has an FPT algorithm when parameterized by $k$.

\begin{restatable}
{theorem}{FPT}\label{thm:FPT-UDG}
\stree on unit disk graphs can be solved in $2^{O(k)}n^{O(1)}$ time.
\end{restatable}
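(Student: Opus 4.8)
The plan is to exploit the geometry of unit disk graphs (more precisely, their clique-grid structure) to reduce, in polynomial time, to an equivalent instance in which the number of terminals is $O(k)$, and then to solve that reduced instance with the classical Dreyfus--Wagner dynamic program~\cite{dreyfus1971steiner}. Throughout I would work with a clique-grid representation of $G$: a partition of $V(G)$ into cells so that each cell induces a clique and two vertices can be adjacent only when their cells are within constant $L_\infty$-distance in the grid. Such a representation is computable in polynomial time for unit disk graphs, and in it every vertex has neighbours in only $O(1)$ cells.

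First I would prove the key structural bound. Let $R_1,\dots,R_c$ be the connected components of $G[R]$. The central observation is that, since the neighbourhood of any vertex $v$ is contained in $O(1)$ cells and the terminals inside a single cell form a clique (hence lie in one component of $G[R]$), every vertex of $G$ is adjacent to terminals of at most $c_0=O(1)$ distinct components. Now suppose $(G,R,k)$ is a yes-instance witnessed by a tree $T$ with Steiner set $W$, $|W|\le k$. Because there is no edge between terminals of different components, each $R_i$ can be attached to the rest of $T$ only through edges to vertices of $W$; contracting each $R_i$ to a single node and projecting $T$ onto the resulting incidence structure between component-nodes and Steiner vertices yields a connected graph in which (assuming $c\ge 2$) every component-node has degree at least one. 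Hence there are at least $c$ component-to-Steiner incidences, while the $W$-side contributes at most $c_0\,|W|$ of them, giving $c\le c_0\,|W|\le c_0 k$. Consequently, if the number of components of $G[R]$ exceeds $c_0 k$ we may immediately answer \no; otherwise $c=O(k)$ (and the case $c=1$ is trivially a yes-instance with no Steiner vertices).

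With $c=O(k)$ in hand, I would contract each component $R_i$ to a single terminal $r_i$, making $r_i$ adjacent precisely to those vertices that are adjacent in $G$ to some terminal of $R_i$. A Steiner tree for $\{r_1,\dots,r_c\}$ in the contracted graph using a Steiner set $W$ lifts back to a Steiner tree for $R$ in $G$ with the same Steiner vertices---expand each $r_i$ into a spanning tree of $G[R_i]$, which costs no Steiner vertices, and realise each contracted edge by a witnessing terminal--Steiner edge---and conversely; so the minimum number of Steiner vertices is preserved. The contracted instance has only $O(k)$ terminals, so running the Dreyfus--Wagner recurrence, in the standard variant that minimises the number of Steiner (non-terminal) vertices rather than edge weight, decides whether a tree with at most $k$ Steiner vertices exists in time $3^{O(k)}n^{O(1)}=2^{O(k)}n^{O(1)}$.

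The main obstacle is the structural bound of the second paragraph: it is exactly here that the geometry is indispensable. On general graphs \stree parameterized by $k$ is W[2]-hard, and on disk graphs of unbounded aspect ratio a single large disk can meet terminals of arbitrarily many components, so $c$ cannot be bounded by any function of $k$---consistent with the W[1]-hardness proved later in the paper. Making the incidence-counting argument fully rigorous (in particular, verifying that projecting $T$ onto the component/Steiner structure does yield a connected graph, and that distinct components force distinct incidences) is the crux; once it is established, the contraction step and the invocation of Dreyfus--Wagner are routine.
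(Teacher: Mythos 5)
Your proposal is correct and follows essentially the same route as the paper: bound the number of connected components of $G[R]$ by $O(k)$ using the clique-grid structure (each cell's terminals form a clique and hence lie in one component, and each Steiner vertex sees only $O(1)$ cells), contract each component to a single terminal, and run Dreyfus--Wagner on the resulting instance with $O(k)$ terminals. The only cosmetic difference is that the paper minimizes the number of edges in the unweighted contracted graph and observes this is equivalent to minimizing Steiner vertices, whereas you invoke the Steiner-vertex-minimizing variant directly; both are fine.
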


Here, we show that solving the \stree problem on an instance $(G,R,t,k)$ is equivalent to solving the problem on an instance $(G',R',t',k)$ where the graph $G'$ is obtained by contracting all connected components of $G[R]$. Although $G'$ loses all geometric properties, we show that the number of terminals in $R'$ is only dependent on $k$. This essentially changes the problem to running the Dreyfus-Wagner algorithm on $(G',R',t',k)$.

Both the results in Theorem~\ref{thm:subexp-UDG} and \ref{thm:FPT-UDG} are shown to work for a superclass of graphs, called clique-grid graphs. We would like to remark that the algorithms can also be made to work for disk graphs with constant aspect ratio.

Finally, in contrast, in Section~\ref{sec:whard} we prove that the \stree problem for disk graphs is W[1]-hard, parameterized by the number Steiner vertices $k$. The \stree problem is known to be W[2]-hard on general graphs~\cite{downey2012parameterized}. 
However, it is not clear how to use that reduction for disk graphs. We show a reduction of our problem from {\sc Grid Tiling} with $\ge$~\cite{cygan2015parameterized},
ruling out the possibility of a $f(k)n^{o(k)}$ time algorithm for any function $f$, assuming ETH.

\begin{restatable}
{theorem}{whardness}\label{w1-hard}
\label{whard}
The \stree problem on disk graphs is W[1]-hard, parameterized by the number of Steiner vertices $k$.
\end{restatable}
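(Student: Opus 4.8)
The plan is to establish W[1]-hardness via a parameterized reduction from \textsc{Grid Tiling with $\geq$}, which the paper itself flags as the intended source problem. Recall that an instance of this problem consists of a $k \times k$ grid of nonempty sets $S_{i,j} \subseteq [n] \times [n]$, and the task is to select one pair $s_{i,j} = (a_{i,j}, b_{i,j}) \in S_{i,j}$ per cell so that horizontal neighbors satisfy $a_{i,j} \geq a_{i,j+1}$ (the first coordinate is nondecreasing leftward along a row) and vertical neighbors satisfy $b_{i,j} \geq b_{i+1,j}$ (the second coordinate is nondecreasing upward along a column). This problem is W[1]-hard parameterized by $k$ and, under ETH, admits no $f(k)n^{o(k)}$ algorithm; so a reduction that produces a disk-graph \stree instance with Steiner budget $O(k^2)$ transfers both lower bounds, matching the claimed conclusion and (with the subexponential algorithm of Theorem~\ref{thm:subexp-UDG}) showing the exponent dependence is essentially tight.

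First I would lay out a gadget geometry in the plane. The natural design places a \emph{cell gadget} for each grid position $(i,j)$, arranged in a $k \times k$ block layout, where within each gadget the candidate pairs of $S_{i,j}$ are encoded as a set of points (disks) whose coordinates reflect the values $(a_{i,j}, b_{i,j})$. I would force each cell gadget to contribute exactly one selected disk to any solution tree by making that disk a necessary Steiner connector between terminals, and by choosing the radii/positions so that only one candidate can be used without exceeding the budget. The terminal set $R$ would consist of ``anchor'' points laid out so that the unique cheap way to connect them threads through one disk per cell. The inequality constraints between adjacent cells are the crux: I would encode them geometrically so that two disks in horizontally adjacent gadgets are connectable (their disks intersect, or there is a short connecting path) precisely when $a_{i,j} \geq a_{i,j+1}$, and analogously for the vertical constraint via the $b$-coordinate. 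This is where the \emph{disk} model (as opposed to \emph{unit} disk) is essential and explains why the hardness is stated for disk graphs: allowing variable radii lets one tune intersection patterns to realize threshold/comparison predicates that unit disks cannot. The budget $k$ for Steiner vertices would be set to exactly the number of cell gadgets plus the connector disks needed between adjacent cells, i.e.\ $\Theta(k^2)$, which is a function of the \textsc{Grid Tiling} parameter alone.

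The two directions of correctness are then argued as follows. For soundness (a valid tiling yields a small Steiner tree), I would take a valid selection $\{s_{i,j}\}$, activate the corresponding disk in each cell gadget, and verify that adjacency-consistent choices make the activated disks connectable into a single tree spanning all terminals using only the budgeted Steiner disks; the inequality constraints guarantee the needed intersections exist. For completeness (a small Steiner tree yields a valid tiling), I would argue that any tree using at most $k$ Steiner vertices is forced by the gadget structure to pick exactly one candidate disk per cell, and that the connectivity of adjacent gadgets through the tree certifies the corresponding inequalities; reading off the selected disks then recovers a valid tiling. The most delicate part is the completeness argument together with the geometric realizability: I must ensure that the encoding is faithful in both directions, i.e.\ that there is no ``cheating'' path that connects the terminals while violating a comparison constraint, and simultaneously that all disks have consistent real coordinates and radii realizing exactly the intended intersection graph. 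The main obstacle I anticipate is engineering the comparison gadget so that disk intersection is equivalent to the threshold inequality on coordinates \emph{and} bounding the Steiner count tightly enough that a budget of $O(k^2)$ admits only ``honest'' solutions; ruling out unintended shortcuts through the dense cliques that disk graphs naturally contain will require a careful separation/packing argument, likely by spacing the gadgets far apart so that cross-gadget connections are possible only through designated connector disks.
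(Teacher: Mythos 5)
Your plan matches the paper's proof in all essentials: a parameterized reduction from \textsc{Grid Tiling with} $\ge$ in which each grid cell becomes a cluster of radius-$1$ candidate Steiner disks (one per pair in $S_{i,j}$, offset by $\epsilon a,\epsilon b$), a tiny terminal disk inside each cluster forces exactly one selection per cell within the budget $k^2$, rows of small terminal disks between adjacent cells are jointly covered by the two selected neighbours precisely when the $\ge$ constraint holds, and chains of terminal disks supply the remaining connectivity --- exactly the $D_1\uplus D_2\uplus D_3$ construction the paper carries out, including your (correct) observation that variable radii are what make the threshold gadget work. The proposal is a correct outline of the same argument; the only work left is the explicit coordinate/distance calculation, which the paper supplies.
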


\section{Preliminaries}\label{sec:prelims}

The set $\{1,2,\ldots,n\}$ is denoted as $[n]$. For a graph $G$, and a subset $V' \subseteq V(G)$, $G[V']$ denotes the subgraph induced on $V'$. The \exst problem takes as input a graph $G$, a terminal set $R$ with $t$ terminals and a positive integer $k$. The aim is to determine whether there is a Steiner tree $T$ in $G$ for $R$ that has exactly $k$ Steiner vertices. A Steiner tree with at most $k$ Steiner vertices is called a $k$-Steiner tree while one with exactly $k$ Steiner vertices is called an exact $k$-Steiner tree. Note that if $T$ is an exact $k$-Steiner tree then $\vert V(T) \vert = t+k$. When the \stree or \exst problem is restricted to taking input graphs only from a graph class $\mathcal{G}$, then these variants are referred to as \stree on $\mathcal{G}$ and \exst on $\mathcal{G}$, respectively.

\begin{observation}\label{obs:exact}
A tree $T$ is a $k$-Steiner tree for an instance $(G,R,t,k)$ if and only if $T$ is an exact $k'$-Steiner tree for the instance $(G,R,t,k')$ of \exst for some $k' \leq k$.
\end{observation}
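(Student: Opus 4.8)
The plan is to establish the two directions of the biconditional separately, with the only substantive content being a bookkeeping argument on the precise number of Steiner vertices appearing in $T$. The starting observation is that in both \stree and \exst a feasible tree must span all terminals of $R$, so the spanning requirement and the requirement of being a tree are identical across the two formulations; the sole difference lies in whether the count of Steiner vertices — that is, the vertices of $V(T)\setminus R$ — is merely bounded by $k$ (in \stree) or fixed to a prescribed exact value (in \exst). Once this is noted, the equivalence reduces to the elementary fact that the predicate ``at most $k$'' is the disjunction over all $k' \le k$ of the predicates ``exactly $k'$''.

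For the forward direction, I would begin from a $k$-Steiner tree $T$ for the instance $(G,R,t,k)$. By definition $T$ is a tree spanning $R$ that contains at most $k$ Steiner vertices. I would then simply set $k'$ to be the actual number of Steiner vertices of $T$, namely $k' = |V(T)\setminus R|$. Since $T$ is a $k$-Steiner tree we have $k' \le k$, and by the choice of $k'$ the tree $T$ contains exactly $k'$ Steiner vertices; hence $T$ is an exact $k'$-Steiner tree for the instance $(G,R,t,k')$ of \exst, as required.

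For the reverse direction, I would take an exact $k'$-Steiner tree $T$ for $(G,R,t,k')$ with $k' \le k$. Again $T$ is a tree spanning $R$, and it has exactly $k'$ Steiner vertices; since $k' \le k$, the number of Steiner vertices of $T$ is in particular at most $k$, so the very same subgraph $T$ witnesses a $k$-Steiner tree for the instance $(G,R,t,k)$.

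The only point requiring any care is to confirm that a single subgraph $T$ is simultaneously admissible in both problems — that the notions of ``tree'' and ``spanning $R$'' coincide in the two definitions, so that no additional structural condition is silently imposed by one formulation but not the other. Having verified this, the statement follows immediately, and I do not anticipate any genuine obstacle; the observation is essentially definitional and serves mainly to license passing freely between the \stree and \exst formulations in the algorithms that follow.
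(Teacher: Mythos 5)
Your proof is correct: the paper states this observation without proof, treating it as immediate from the definitions, and your argument is exactly the definitional unpacking (take $k' = |V(T)\setminus R|$ in one direction, and note that ``exactly $k'$ with $k'\le k$'' implies ``at most $k$'' in the other). Nothing further is needed.
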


\begin{definition}~\cite{fomin2019finding}\label{def:clique-grid}
 A graph $G$ is a clique-grid graph if there is a pair $p,p'\in \mathbb{N}$ and a function $f: V(G) \rightarrow [p]\times [p']$ such that the following conditions hold:
 \begin{enumerate}
  \item For all $(i,j) \in [p] \times [p']$, $f^{-1}(i,j)$ is a clique in $G$.
  \item For all $uv \in E(G)$, if $f(u) = (i,j)$ and $f(v) = (i',j')$ then $\vert i-i'\vert \leq 2$ and $\vert j-j' \vert \leq 2$.
 \end{enumerate}
 Such a function $f$ is called a representation of the graph $G$.
\end{definition}

Unit disk graphs are clique-grid graphs~\cite{fomin2019finding}. Next, we define a representation of a clique-grid graph called a cell graph.
\begin{definition}~\cite{fomin2019finding}\label{def:cell-graph}
 Given a clique-grid graph $G$ with representation $f:V(G) \rightarrow [p]\times [p']$, the cell graph ${\sf cell}(G)$ is defined as follows:
 \begin{itemize}
  \item $V({\sf cell}(G)) = \{v_{ij} \vert i \in [p], j \in [p'], f^{-1}(i,j)\neq \emptyset\}$,
  \item $E({\sf cell}(G)) = \{v_{ij}v_{i'j'} \vert (i,j) \neq (i',j'), \exists u \in f^{-1}(i,j) \mbox{ and } \exists v \in f^{-1}(i',j') \mbox{ such that } uv\in E(G)\}$.   
 \end{itemize}
\end{definition}
For each vertex $v_{ij}\in V({\sf cell}(G))$, the pair $(i,j)$ is also called a cell of $G$ and by definition corresponds to a non-empty clique of $G$. A vertex $v\in V(G)$ is said to be in the cell $(i,j)$ if $f(v) = (i,j)$. The neighbour of a cell $\mathcal{C} = (i,j)$ in a cell $\mathcal{C}' = (i',j') \neq \mathcal{C}$ are $\{v \in V(G) \vert f(v) = (i',j'), \exists u \mbox{ such that } f(u) = (i,j) \mbox{ and } uv \in E(G)\}$. 

Let $G$ be a graph. A {\em path decomposition} of a graph $G$ is a pair $\mathcal{T} = (P,\beta: V(P) \rightarrow 2^{V(G)})$, where 
$P$ is a path where every node $p\in V(P)$ 
is assigned a subset $\beta(p)\subseteq V(G)$, called a bag, such that 
the following conditions hold: (i) $\bigcup_{p\in V(P)}{\beta(p)}=V(G)$, (ii) for every edge $xy\in E(G)$ there is a $p\in V(P)$ such that  $\{x,y\}\subseteq \beta(p)$, and (iii) for any $v\in V(G)$ the subgraph of $P$ induced by the set  $\{p\mid v\in \beta(p)\}$ is connected. A path decomposition will also be denoted as a sequence of bags $\{\beta(p_1),\beta(p_2),\ldots,\beta(p_q)\}$ where $P = p_1p_2\ldots p_q$. The {\em width} of a path decomposition is $\max_{p\in V(P)} |\beta(p)| -1$. The {\em pathwidth} of $G$ is the  minimum width over all path decompositions of $G$ and is denoted by ${\sf pw}(G)$. Given a path decomposition of a graph $G$, we say it is rooted at exactly one of the two degree one vertices of the underlying path.

\begin{definition}~\cite{fomin2019finding}\label{def:cell-pathdecomp}
A path decomposition $\mathcal{T} = (P,\beta)$ of a clique-grid graph $G$ with representation $f:V(G) \rightarrow [p] \times [p']$ is a nice $\ell$-clique path decomposition ($\ell$-NCPD) if for the root $r$ of $P$, $\beta(r) = \emptyset$ and for each $v \in V(P)$ the following hold:

\begin{enumerate}
 \item There are at most $\ell$ cells $\{(i_1,j_1),(i_2,j_2),\ldots,(i_\ell,j_\ell)\}$ such that $\beta(v) = \bigcup_{p=1}^{\ell} f^{-1}(i_p,j_p)$,
 \item The node $v$ is one of the following types: (i) Leaf node where $\beta(v) = \emptyset$, (ii) Forget node where $v$ has exactly one child $u$ and there is a cell $(i,j) \in [p] \times [p']$ such that $f^{-1}(i,j) \subseteq \beta(u)$ and $\beta(v) = \beta(u) \setminus f^{-1}(i,j)$, (iii) Introduce node where $v$ has exactly one child $u$ and there is a cell $(i,j) \in [p] \times [p']$ such that $f^{-1}(i,j) \subseteq \beta(v)$ and $\beta(u) = \beta(v) \setminus f^{-1}(i,j)$, 
\end{enumerate}
\end{definition}

See Figure~\ref{l-ncpd} for an example of an NCPD. A path decomposition for a clique-grid graph $G$ with representation $f$ where only property $1$ of Definition~\ref{def:cell-pathdecomp} is true for a positive number $\ell$ is referred to as an $\ell$-CPD. 

\begin{figure}[h]
\centering
\includegraphics[scale=.8]{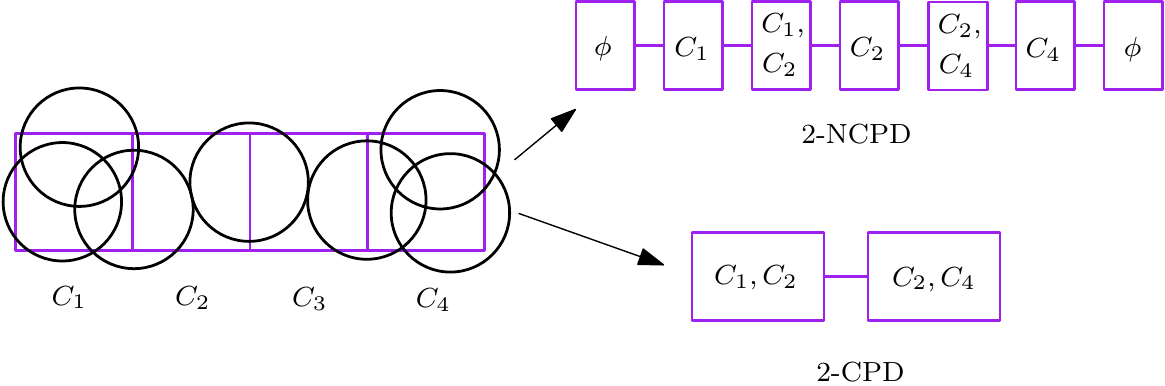}
\caption{An illustration of nice $2$-clique path decomposition.}
\label{l-ncpd}
\end{figure}

\section{NP-Hardness of \stree on Unit Disk Graphs}\label{sec:nph}

In this section, we consider the \stree problem on unit disk graphs and prove that this problem is \NPH. 
We show a reduction from {\sc Connected Vertex Cover} in planar graphs with maximum degree $4$. The reduction is very similar to that in~\cite{abu2015euclidean}.

\begin{theorem}
\label{nphard}
The \stree problem on unit disk graphs is \NPH.
\end{theorem}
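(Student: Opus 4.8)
The plan is to reduce from \textsc{Connected Vertex Cover} on planar graphs of maximum degree $4$, which is \NPC, and to realize the reduced instance explicitly as a unit disk graph. Given such a graph $H$ together with a target cover size $\ell$, I would first compute an orthogonal grid drawing of $H$: since $\Delta(H)\le 4$, the graph $H$ embeds in a grid of polynomial area so that each vertex sits at a distinct grid point and each edge is drawn as an internally vertex-disjoint rectilinear path of grid segments. After rescaling the drawing by a large factor, there is enough room to ``fatten'' it into unit disks.

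The construction of the instance $(G,R,t,k)$ of \stree is then as follows. For every vertex $v\in V(H)$ I would place a single unit disk $p_v$ at the (rescaled) grid point of $v$; these disks are the only \emph{Steiner} vertices of $G$. For every edge $e=uv\in E(H)$ I would place a chain of unit disks along the rectilinear path drawing $e$, with consecutive disks intersecting, so that the chain forms an induced path whose two endpoint disks intersect $p_u$ and $p_v$, respectively. Every disk on every edge chain is declared a \emph{terminal}; thus $R$ consists of all chain disks, $t=|R|$ is polynomial, and I would set the budget $k=\ell$. The intuition is that selecting the Steiner vertex $p_v$ corresponds to placing $v$ into the vertex cover: $p_v$ is the only object that can glue together the chains of the edges incident to $v$, so connecting all terminals through as few $p_v$'s as possible mirrors finding a small connected vertex cover.

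For correctness I would argue both directions against this dictionary. If $S$ is a connected vertex cover of $H$ with $|S|\le\ell$, then the subgraph of $G$ induced by all terminals together with $\{p_v : v\in S\}$ is connected: every edge chain attaches to the core because $S$ covers that edge, and the core $\{p_v : v\in S\}$ is connected because $H[S]$ is connected. Any spanning tree of this subgraph is then a Steiner tree using exactly $|S|\le k$ Steiner vertices. Conversely, given a Steiner tree $T$ with at most $k$ Steiner vertices, let $S'=\{v : p_v\in V(T)\}$. Since every chain lies in $T$ and, in $G$, a chain is adjacent to the rest of the graph only through $p_u$ and $p_v$, connectivity of $T$ forces $p_u\in V(T)$ or $p_v\in V(T)$ for each edge $uv$; hence $S'$ is a vertex cover, and projecting the connected tree $T$ onto the $p_v$'s shows that $H[S']$ is connected, so $S'$ is a connected vertex cover of size at most $\ell$.

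The main obstacle, and the part requiring genuine geometric care, is guaranteeing that $G$ is a \emph{bona fide} unit disk graph realizing \emph{exactly} the intended adjacencies --- in particular, the absence of spurious edges. Two edge chains that run close together in the orthogonal drawing, and the up-to-four chain endpoints packed around a single disk $p_v$, must be placed so that only the intended pairs of disks intersect; otherwise a chain could attach to $T$ without passing through either endpoint disk, and the backward direction of the vertex-cover equivalence would break. I would handle this by choosing the rescaling factor and the inter-disk spacing large enough relative to the drawing's resolution, so that non-consecutive chain disks, as well as disks belonging to distinct chains or to distinct vertices, lie strictly farther apart than the intersection threshold, while consecutive disks of a chain and each $p_v$ together with its incident chain endpoints remain within it. Verifying these distance inequalities (and that the gadget still occupies polynomial area) is routine once the spacing is fixed, and it certifies both that the reduction runs in polynomial time and that $G$ is a unit disk graph.
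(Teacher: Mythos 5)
Your reduction is essentially the paper's: both reduce from \textsc{Connected Vertex Cover} on planar graphs of maximum degree~$4$ via an orthogonal grid embedding, replacing each vertex by a Steiner disk and each edge by a chain of disks along its rectilinear path, and both hinge on the same spacing/rescaling argument to rule out spurious adjacencies between chains and around degree-$4$ vertices. The only substantive difference is bookkeeping: you declare every chain disk a terminal so that the Steiner budget is exactly the cover size $\ell$, whereas the paper additionally marks two disks per edge chain as Steiner vertices and sets the budget to $m+2k-1$; your variant is a correct, and arguably cleaner, instantiation of the same idea.
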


\begin{proof}
We show a reduction from the {\sc Connected Vertex Cover} in planar graphs with maximum degree $4$ problem, which is known to be \NPH \cite{garey1977rectilinear}. 
Given a planar graph $G$ with maximum degree $4$ and an integer $k$, the {\sc Connected Vertex Cover} problem asks to find if there exists a vertex cover $D$ for $G$ such that the subgraph induced by $D$ is connected and $|D|\le k$. 
We adopt the proof of Abu-Affash \cite{abu2015euclidean}, where it was shown that the $k$-{\sc Bottleneck Full Steiner Tree} problem is \NPH. 
We make this reduction compatible for unit disk graphs.
Given a planar graph $G$ with maximum degree $4$ and an integer $k$,
we construct an unit disk graph $G_{\mathcal{C}}$ where $V(G_{\mathcal{C}})=\mathcal{C}$ in polynomial time, where $V(G_{\mathcal{C}})$ is divided into two sets of unit disks $R$ and $S$, denoted by Steiner and terminals, respectively.
Let $V(G)=\{v_1,v_2,\ldots,v_n\}$ and let $E(G)=\{e_1,e_2,\ldots,e_m\}$. Then, we compute an integer $k'$ such that $G$ has a connected vertex cover $D$ of size $k$ if and only if there exists a \stree with at most $k'$ Steiner vertices of $G_{\mathcal{C}}$. 

As as an intermediate step we build a rectangular grid graph $G'$.
First, we embed $G$ on a rectangular grid, with distance at least $8$ between adjacent vertices. Each vertex $v_i\in V(G)$ corresponds to a grid vertex, and each edge $e=v_iv_j\in E(G)$ corresponds to a rectilinear path comprised of some horizontal and vertical grid segments with endpoints corresponding to $v_i$ and $v_j$. Let $V(G')=\{v'_1,\ldots,v'_n\}$ be the grid points corresponding to the vertices of $V(G)$, and let $E(G')=\{p_{e_1},\ldots,p_{e_m}\}$ 
be the set of paths corresponding to the edges of $E(G)$
Moreover, these paths are pairwise disjoint; see Figure~\ref{np1}(b).
This embedding can be done in $O(n)$ time and the size of the grid is at most $n-2$ by $n-2$; see \cite{schnyder1990embedding}.
Next, we construct an unit disk graph $G_{\mathcal{C}}$ from $G'$.
First, we replace each grid vertex $v'_i\in V(G')$ by an unit disk. Let $C=\{c_1,\ldots,c_n\}$ be the set of unit disks centered at the grid points corresponding to the vertices of $V(G')$. 
For the sake of explanation we call these disks grid point disks. At this point, the unit disk graph is not connected due to the edge length which we have taken between any two adjacent vertices in the grid graph. In fact this length ensures that there are no undesirable paths other than the ones in $G$. Next, we place two sets of disks on each path $p_{e_i}\in E(G')$. Let $|p_{e_i}|$ be the total length of the grid segments of $p_{e_i}$. 
We place two Steiner disks on $p_{e_i}$, such that each one of them is adjacent to a grid point disk corresponding to $p_{e_i}$ and the distance between their centers is exactly $2$. Next, we place $|p_{e_i}|-6/2$ many terminals disks on $p_{e_i}$ such that the distance between any two adjacent centers is exactly $2$. 
See Figure~\ref{np1}(c) for detailed explanation. 
Let $s(e_i)$ be the set of Steiner disks and 
$t(e_i)$ be the set of terminal disks placed to $p_{e_i}$. The terminal set $R=\underset{e_i\in E(G')}{\bigcup} t(e_i)$; the Steiner set $S=C\cup \underset{e_i\in E(G')}{\bigcup} s(e_i)$.
$V(G_{\mathcal{C}})=R\cup S$ and $G_{\mathcal{C}}$ is the intersection graph induced by $V(G_{\mathcal{C}})$. Finally, we set $k'=m+2k-1$. 
Observe that, for any path $p_{e_i}$, the terminal set $t(e_i)$ itself form a Steiner tree without any Steiner disks. However, in order to make that tree connected we need at least one of Steiner disks from $s(e_i)$. This completes the construction.

\begin{figure}[h]
\centering
\includegraphics[scale=.8]{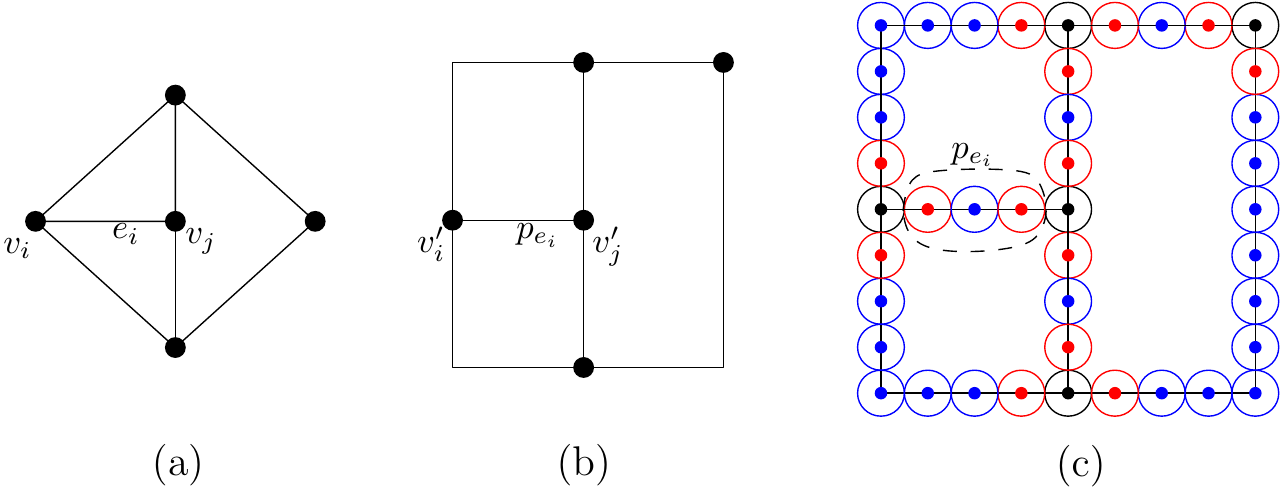}
\caption{(a) A planar graph $G$ of maximum degree $4$,
(b) the intermediate rectilinear embedding $G'$ of $G$,
(c) the unit disk graph $G_{\mathcal{C}}$; the black disks are corresponding to the grid vertices of $G'$, the blue disks are Steiner disks and the red disks are the terminal disks.}
\label{np1}
\end{figure}

In the forward direction, suppose $G$ has a connected vertex cover $D$ of size at most $k$. We construct a Steiner tree of $R$ in the following manner. 
For each edge $e_i$, we simply take the terminal path induced by $t(e_i)$. Now, let $T_S$ be any spanning tree of the subgraph of $G$ induced by $D$, containing $|D|-1$ edges.  The existence of such a spanning tree is ensured since $D$ is a connected vertex cover of $G$. For each edge $e=v_iv_j\in T_S$ we connect the corresponding disks $c_i,c_j$ by two Steiner red disks adjacent to them. 
Then, for each edge $e=v_iv_j\in G\setminus T_S$ we select one endpoint that is in $D$ (say $v_i$) and connect $c_i$ to the tree by its adjacent disk. 
The constructed tree is a Steiner tree of $R$ consisting $|D|+2(|D|-1)+(m-(|D|-1))$ which is $m+2k-1$.

Conversely, let there exists a Steiner tree $T$ of $R$ with at most $k'$ Steiner disks. Let $D\subseteq C$ be the set of vertices that appear in $T$,
and let $T'$ be the subtree of $T$ spanning over $D$. 
For each subset $t(e_i)\subseteq R$, let $T_{e_i}$ be the subtree of $T_{e_i}$ spanning the vertices in $t(e_i)$. By the above construction, $T_{e_i}$ does not require any Steiner disk. Moreover, it is easy to see that in any valid solution $T_{e_i}$ must be connected to at least one endpoint of $D$.
This implies that the set of vertices in $G$ corresponding to the vertices in $D$ is a connected vertex cover of $G$. 
Moreover a tree $T_{e_i}$ which also a subtree of $T$ is connected to $D$ via two Steiner disks of $s(e_i)$. Therefore, $T_S$ contains $|D|+2(|D|-1)+(m-(|D|-1))$ many Steiner disks. We started with the tree $T$ with at most $k'=m+2k-1$ many Steiner disks. This completes the proof.  
\end{proof}
\section{Subexponential Exact Algorithm for \stree on Unit Disk Graphs}\label{sec:subexp}
In this section, we prove Theorem~\ref{thm:subexp-UDG} by designing a sub-exponential algorithm for the \stree problem on unit disk graphs parameterized by $t+k$, where $t$ is the number of terminals and $k$ is an upper bound on the number of Steiner vertices.
In fact, our aim for this section is to design a subexponential algorithm for \stree on clique-grid graphs and as unit disk graphs are clique-grid graphs~\cite{fomin2019finding}, this would imply the algorithm proposed in Theorem~\ref{thm:subexp-UDG}.

\begin{lemma}\label{lem:subexp-CGG}
 The \stree problem on clique-grid graphs can be solved in $n^{O(\sqrt{t+k})}$ time.
\end{lemma}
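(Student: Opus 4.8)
The plan is to follow the shifting-plus-dynamic-programming template of~\cite{fomin2019finding}, exploiting that a clique-grid graph comes with a representation $f:V(G)\to[p]\times[p']$ and a cell graph ${\sf cell}(G)$. First I would reduce \stree to \exst: by Observation~\ref{obs:exact} the instance $(G,R,t,k)$ is a yes-instance of \stree if and only if $(G,R,t,k')$ is a yes-instance of \exst for some $k'\le k$, so it suffices to solve \exst for each of the $k+1$ values $k'\in\{0,\dots,k\}$ and take the disjunction. The structural fact driving the $\sqrt{t+k}$ bound is that any exact $k'$-Steiner tree $T$ has exactly $t+k'\le t+k=:N$ vertices and hence occupies at most $N$ cells; since $T$ is connected these cells induce a connected subgraph of ${\sf cell}(G)$, and such a connected $N$-vertex subgraph of the (grid-like) cell graph has pathwidth $O(\sqrt{N})$.

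The first real step is the Baker-like shifting. The difficulty is that we cannot build the width-$O(\sqrt{N})$ decomposition of the solution directly, since we do not know the solution; instead I would enumerate a family $\mathcal{F}$ of restricted \exst instances, each carrying a nice $\ell$-clique path decomposition ($\ell$-NCPD, Definition~\ref{def:cell-pathdecomp}) with $\ell=O(\sqrt{N})$, obtained by overlaying the cell grid with blocks of side $\Theta(\sqrt{N})$ shifted through the offsets in each coordinate and sweeping the blocks in order. I would then establish answer preservation in both directions: if $(G,R,t,k')$ is a yes-instance of \exst, then for the offset aligned with a width-$O(\sqrt{N})$ decomposition of the solution's cell set the corresponding member of $\mathcal{F}$ is a yes-instance; and conversely any solution found in a member of $\mathcal{F}$ is a genuine exact $k'$-Steiner tree of $G$, so a no-instance of \exst produces no yes-instance in $\mathcal{F}$. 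The number of enumerated instances stays within the $n^{O(\sqrt{t+k})}$ budget.

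For each member of $\mathcal{F}$, the second step is a dynamic-programming subroutine solving \exst along its $\ell$-NCPD. Sweeping from the leaves to the root, at each bag — which spans at most $\ell=O(\sqrt{t+k})$ cells — the state records, for each such cell, a bounded set of interface vertices used to connect it to its $O(1)$ neighbouring cells (by Definition~\ref{def:clique-grid} a cell has neighbours only within the surrounding $5\times5$ window), together with the partition of these interface cells into connected components of the partial solution and the number of Steiner vertices committed so far. The clique structure of a cell is what keeps the state small: all chosen vertices inside one cell lie in a single component the moment the cell is introduced, so per cell only $O(1)$ connector vertices and one component label must be remembered, giving $n^{O(\ell)}\cdot 2^{O(\ell\log\ell)}\cdot(k+1)=n^{O(\sqrt{t+k})}$ states. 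Forget nodes discard a cell after checking its terminals are connected and merge components; introduce nodes add a cell's chosen vertices and their incident clique and cross-cell edges, updating the partition and the Steiner count; the root is accepting when a single component contains all terminals with Steiner count exactly $k'$.

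Running this DP on every member of $\mathcal{F}$ for every $k'\le k$ and taking the disjunction then solves \stree in $n^{O(\sqrt{t+k})}$ time. The hard part will be the shifting analysis of the first step: making precise that the solution's connected cell set admits a width-$O(\sqrt{t+k})$ decomposition aligned to one of the enumerated shifts, and verifying answer preservation in both directions while keeping $|\mathcal{F}|$ within budget. The DP, though technical in its connectivity bookkeeping and exact Steiner count, is routine once the clique-per-cell observation is in place; the genuine obstacle is the geometric width bound supplied by the Baker-style shifting of~\cite{fomin2019finding}.
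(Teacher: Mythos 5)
Your proposal follows essentially the same route as the paper: a Baker-style shifting step that produces a family $\mathcal{F}$ of \exst instances each equipped with an $O(\sqrt{t+k})$-NCPD, an answer-preservation argument in both directions, and a dynamic program over the NCPD whose states record per-cell representative vertices, a partition into connected components of the partial solution, and the Steiner count, with the per-cell clique structure keeping the state space at $n^{O(\sqrt{t+k})}$. Two points in your sketch are under-specified in a way that matters. First, the shifting as you describe it (``blocks of side $\Theta(\sqrt{N})$ shifted through the offsets in each coordinate'') does not directly give bags spanning $O(\sqrt{N})$ cells, since such a block contains $\Theta(N)$ cells; the paper instead labels \emph{pairs of columns} cyclically with $\sqrt{t+k}$ labels, pigeonholes to find a label class meeting at most $\sqrt{t+k}$ vertices of the (unknown) solution, \emph{guesses that vertex set} (this is precisely where the $n^{O(\sqrt{t+k})}$ family size comes from), deletes the remaining non-terminals in those columns so the residual graph splits into strips of $O(\sqrt{t+k})$ columns, and then sweeps each strip row by row with bags of three consecutive rows. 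Without the explicit guess-and-delete step, a member of $\mathcal{F}$ restricted to strips would lose the solution vertices sitting in the separator columns and connectivity across strips could not be certified. Second, your appeal to ``$O(1)$ connector vertices per cell'' needs the exchange arguments the paper isolates as Observations~\ref{obs:cell-cell-edge} and~\ref{obs:cell-bd} (an optimal tree can be assumed to use at most a constant number of cross-cell edges and Steiner vertices per cell); the statement about the pathwidth of the solution's cell set, while plausible, is not what the algorithm actually runs on, since the DP must sweep the whole host graph $H$ containing all candidate Steiner vertices, not just the solution's cells.
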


For the rest of the section, we concentrate on proving Lemma~\ref{lem:subexp-CGG}. Informally, we first apply a Baker-like shifting strategy to create a family $\mathcal{F}$ of instances of \exst that preserves the answer for the input instance $(G,R,t,k)$ of \stree: if $(G,R,t,k)$ is a yes-instance then there is at least one constructed instance in $\mathcal{F}$ that is a yes-instance of \exst; if $(G,R,t,k)$ is a no-instance of \stree then all instances of $\mathcal{F}$ are no-instances of \exst. As a second step, we design a dynamic programming subroutine to solve \exst on each of the constructed instances of $\mathcal{F}$, which is enough to solve the \stree problem on $(G,R,t,k)$. 

Before we describe the subexponential algorithm, we state some properties of Steiner trees in clique-grid graphs.

\begin{observation}\label{obs:cell-cell-edge} 
 Consider a $k$-Steiner tree $T$ for a clique-grid graph $G$ with representation $f$, such that the set $\{uv\in E(T) \vert f(u) \neq f(v)\}$ is minimised over all $k$-Steiner trees for $G$. Let $\mathcal{C} = (i,j)$ be a cell of $G$. Then there are at most $\edgeconst$ edges with one endpoint in $\mathcal{C}$ and the other endpoint in another cell. 
\end{observation}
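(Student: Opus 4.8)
Observation~\ref{obs:cell-cell-edge} states that in a $k$-Steiner tree $T$ that minimises the number of inter-cell edges, every cell $\mathcal{C} = (i,j)$ has at most \edgeconst\ edges of $T$ leaving it to another cell. Let me sketch how I would establish this bound.

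\paragraph*{The plan.}
The plan is to bound the number of \emph{other} cells to which $\mathcal{C}$ can be connected by an edge of $T$, and then argue that the minimality of inter-cell edges forces at most one edge of $T$ to go to each such cell. For the first part, I would invoke condition~(2) of Definition~\ref{def:clique-grid}: any edge of $G$ (hence of $T$) with one endpoint $u$ in $\mathcal{C} = (i,j)$ has its other endpoint $v$ in a cell $(i',j')$ with $|i-i'| \le 2$ and $|j-j'| \le 2$. Thus the neighbouring cells of $\mathcal{C}$ lie in the $5 \times 5$ block of cells centred at $(i,j)$, which contains $25$ cells, i.e.\ $24$ cells other than $\mathcal{C}$ itself. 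This already gives the constant \edgeconst\ as the number of candidate neighbouring cells.

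\paragraph*{From candidate cells to edges.}
The key step is to reduce ``at most $24$ neighbouring cells'' to ``at most $24$ inter-cell edges leaving $\mathcal{C}$''. Here I would use the minimality hypothesis together with the fact that each cell is a clique (condition~(1) of Definition~\ref{def:clique-grid}). Suppose, for contradiction, that two distinct edges $u_1 v_1$ and $u_2 v_2$ of $T$ both leave $\mathcal{C}$ and enter the \emph{same} other cell $\mathcal{C}' = (i',j')$, where $u_1, u_2 \in \mathcal{C}$ and $v_1, v_2 \in \mathcal{C}'$. Since $\mathcal{C}$ and $\mathcal{C}'$ are each cliques, all of $u_1,u_2$ are pairwise adjacent and all of $v_1,v_2$ are pairwise adjacent in $G$. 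In a tree, the two edges $u_1 v_1$ and $u_2 v_2$ together with the (unique) tree-paths joining $u_1$ to $u_2$ and $v_1$ to $v_2$ must participate in a structure I can locally re-route: I would delete one of the two inter-cell edges and reconnect the resulting two components using an \emph{intra-cell} edge within $\mathcal{C}$ or within $\mathcal{C}'$ (which exists by the clique property), while preserving acyclicity, connectivity, and the vertex set (hence the Steiner-vertex count $k$). This yields another $k$-Steiner tree with strictly fewer inter-cell edges, contradicting the choice of $T$. Consequently at most one edge of $T$ joins $\mathcal{C}$ to each fixed neighbouring cell, and summing over the at most $24$ neighbouring cells gives the claimed bound.

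\paragraph*{Main obstacle.}
The delicate point — and the step I expect to require the most care — is the re-routing argument: I must verify that after deleting one inter-cell edge between $\mathcal{C}$ and $\mathcal{C}'$ and inserting a suitable clique edge, the result is still a \emph{tree} (connected and acyclic) spanning exactly the same vertex set. The subtlety is that naively adding a clique edge may create a cycle if the two endpoints already lie in the same component after deletion, so I would need to choose which edge to delete and which clique edge to add based on the component structure of $T$ minus the deleted edge. One clean way to handle this is to observe that the two endpoints of the deleted edge fall into different components, and then pick the clique edge so as to bridge exactly those two components, guaranteeing the result is a spanning tree with the same vertices and one fewer inter-cell edge. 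Establishing this case analysis carefully is the crux; the counting of $24$ candidate cells from Definition~\ref{def:clique-grid} is routine.
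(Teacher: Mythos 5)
Your proposal is correct and follows essentially the same route as the paper: bound the candidate neighbouring cells by the $5\times 5$ block around $(i,j)$ (hence $24$ cells) using condition~(2) of Definition~\ref{def:clique-grid}, and then use an exchange argument with a clique edge to force at most one tree edge into each such cell. Your extra care about which clique edge (inside $\mathcal{C}$ or inside $\mathcal{C}'$) actually bridges the two components of $T$ minus the deleted edge is warranted — the paper's write-up always adds the edge inside $\mathcal{C}'$, which reconnects the graph in only one of the two possible component configurations — so your version is, if anything, slightly more complete.
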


\begin{proof}
We claim that in the $k$-Steiner tree where the set $\{uv\in E(T) \vert f(u) \neq f(v)\}$ is minimised, there can be at most one neighbour of $\mathcal{C}$ in each cell $\mathcal{C}'\neq \mathcal{C}$. Suppose that $\mathcal{C}'$ is a cell that contains at least two neighbours of $\mathcal{C}$. Let two such neighbours be $u',v'$. Note that $u'v'$ is an edge in $E(G)$. Let $u,v$ (may be the same) be the neighbours of $u,v$, respectively in $\mathcal{C}$. Note that $uv$ is an edge in $E(G)$. Thus adding the edge $u'v'$ and removing the edge $uu'$ results in a connected graph containing all the terminals. The spanning tree of this connected graph has strictly less number of edges with endpoints in different cells, which is a contradiction to the choice of $T$.  

By the definition of clique-grid graphs, $\vert i-i'\vert , \vert j-j'\vert \leq 2$. Thus, when we fix a cell $\mathcal{C}$ there are at most $\cellconst$ cells that can have neighbours of vertices in $\mathcal{C}$. Putting everything together, for the $k$-Steiner tree $T$ where the set $\{uv\in E(T) \vert f(u) \neq f(v)\}$ is minimised, $\vert \{v \vert f(v) \neq (i,j), \exists u$ such that $f(u) = (i,j), uv \in E(G) \} \vert \leq \edgeconst$.
\end{proof}

\begin{observation}\label{obs:cell-bd} 
 Suppose there is a $k$-Steiner tree for a clique-grid graph $G$, and let $T$ be a $k$-Steiner tree where the set $\{uv\in E(T) \vert f(u) \neq f(v)\}$ is minimised. Moreover, amongst $k$-Steiner trees where $\{uv\in E(T) \vert f(u) \neq f(v)\}$ is minimised, $T$ has minimum number of Steiner points. Then, in $T$ the number of Steiner vertices per cell is at most $\vertconst$.
\end{observation}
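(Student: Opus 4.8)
The plan is to show that every Steiner vertex of $T$ lying in a fixed cell $\mathcal{C}$ must be incident in $T$ to at least one edge leaving $\mathcal{C}$, and then to charge each such Steiner vertex to a distinct inter-cell edge, invoking Observation~\ref{obs:cell-cell-edge} to bound the number of those edges by $\edgeconst$. Since $\edgeconst = \vertconst$, this yields the claimed bound.

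First I would fix a cell $\mathcal{C} = (i,j)$ and let $S_{\mathcal{C}}$ be the set of Steiner vertices $s$ of $T$ with $f(s) = (i,j)$. The central claim is: \emph{every $s \in S_{\mathcal{C}}$ is incident in $T$ to at least one inter-cell edge}, i.e.\ an edge $uv \in E(T)$ with $f(u)\neq f(v)$. To prove this by contradiction, suppose some $s \in S_{\mathcal{C}}$ has all its $T$-neighbours $v_1, \dots, v_d$ inside $\mathcal{C}$. Since $\mathcal{C}$ is a clique, $v_1, \dots, v_d$ are pairwise adjacent in $G$. I would then delete $s$ from $T$: removing a degree-$d$ vertex of a tree splits it into exactly $d$ subtrees, the $p$-th containing $v_p$. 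If $d=1$ this simply deletes a Steiner leaf; if $d \geq 2$ I reconnect the subtrees by adding the star edges $v_1v_2, v_1v_3, \dots, v_1v_d$, all of which exist because $\mathcal{C}$ is a clique (and none already lie in $T$, as $T$ is acyclic). The result is again a tree spanning $R$ with one fewer Steiner vertex, and — crucially — every edge removed or added lies inside $\mathcal{C}$, so the number of inter-cell edges is unchanged. This contradicts the choice of $T$ as having the minimum number of Steiner vertices among all $k$-Steiner trees minimising $\{uv \in E(T) \mid f(u) \neq f(v)\}$, establishing the claim.

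With the claim in hand, I would assign to each $s \in S_{\mathcal{C}}$ one inter-cell edge of $T$ incident to it. This assignment is injective: any inter-cell edge incident to $\mathcal{C}$ has exactly one endpoint in $\mathcal{C}$, so it can be charged by at most one vertex of $S_{\mathcal{C}}$. Hence $|S_{\mathcal{C}}|$ is at most the number of edges of $T$ with one endpoint in $\mathcal{C}$ and the other in a different cell, which is at most $\edgeconst$ by Observation~\ref{obs:cell-cell-edge} (whose hypothesis is met, since $T$ minimises the set of inter-cell edges). As $\edgeconst = \vertconst$, this gives the desired bound of $\vertconst$ Steiner vertices per cell.

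The routine parts are the tree-surgery bookkeeping (that removing a degree-$d$ node yields $d$ components reconnectable by a clique-star) and the injectivity count. The main subtlety I would be careful about is making sure the local modification genuinely contradicts the \emph{secondary} minimality condition rather than the primary one: because all removed and added edges stay within $\mathcal{C}$, the primary quantity (the number of inter-cell edges) is preserved \emph{exactly}, so the strictly smaller Steiner count is a legitimate contradiction. I would also note that a Steiner vertex that happens to be a leaf via an inter-cell edge need not be treated separately — it is already incident to an inter-cell edge and is thus correctly counted — so the single claim above suffices to cover all cases.
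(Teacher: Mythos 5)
Your proposal is correct and follows essentially the same route as the paper: both arguments combine Observation~\ref{obs:cell-cell-edge} with the fact that a Steiner vertex whose $T$-neighbours all lie in its own cell (a clique) can be deleted and its neighbours re-linked without changing the number of inter-cell edges, contradicting the secondary minimality of $T$. If anything, your explicit star-reconnection inside the clique is slightly more careful than the paper's appeal to ``a spanning tree of the resulting connected subgraph,'' since it makes visible that the primary minimized quantity is preserved exactly.
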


\begin{proof}
 For the sake of contradiction, let $\mathcal{C}=(i,j)$ be a cell such that $\vert f^{-1}(i,j) \cap V(T) \vert \geq \vertconst +1$. Then by Observation~\ref{obs:cell-cell-edge}, there is at least one Steiner vertex $v \in f^{-1}(i,j) \cap V(T)$ such that it does not have any neighbours in $T \setminus f^{-1}(i,j)$. Consider the subgraph $T \setminus \{v\}$. Since the vertices of $f^{-1}(i,j)$ induce a clique, $T \setminus \{v\}$ is still a connected subgraph that contains all the terminals and strictly less number of Steiner vertices. Thus, a spanning tree of this connected subgraph contradicts the choice of $T$.   
\end{proof}

Consider a $k$-Steiner tree $T$ for an instance $(G,R,t,k)$ of \stree where $\{uv\in E(T) \vert f(u) \neq f(v)\}$ is minimised and then the number of Steiner vertices is minimised. By Observation~\ref{obs:exact}, $T$ is an exact $k'$-Steiner tree for the instance $(G,R,t,k')$ of \exst for some $k' \leq k$. Next, we define a {\em good family of instances} that preserve the answer for $(G,R,t,k)$ of \stree.

\begin{definition}\label{def:good-fam}
 For an instance $(G,R,t,k)$ of \stree on clique-grid graphs where $G$ has representation $f$, a good family of instances $\mathcal{F}$ has the following properties:
 \begin{enumerate}
  \item For each instance $(H,R,t,k')$ in the family, the input graph $H$ is an induced subgraph of $G$ that contains all vertices in $R$ and $k'\leq k$. Note that $H$ is also a clique-grid graph where $f\vert_{V(H)}$ is a representation.
  \item $(G,R,t,k)$ is a yes-instance of \stree if and only if there exists an instance $(H,R,t,k')\in \mathcal{F}$ which is a yes-instance of \exst.
  \item For any instance $(H,R,t,k') \in \mathcal{F}$, $H$ has a $\pwidth \sqrt{t+k}$-NCPD.
 \end{enumerate}
\end{definition}

We show that given an instance $(G,R,t,k)$ of \stree on clique-grid graphs, a good family of instances can be enumerated in subexponential time. 

\begin{lemma}\label{lem:compute-good-fam}
 Given an instance $(G,R,t,k)$ for \stree on clique-grid graphs  with $G$ represented by $f$, a good family of instances $\mathcal{F}$ can be computed in $n^{O(\sqrt{t+k})}$ time. 
\end{lemma}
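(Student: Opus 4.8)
The plan is to construct the good family $\mathcal{F}$ via a Baker-type shifting argument on the columns (or diagonals) of the grid representation $f$, so that cutting the graph into vertical strips of bounded width yields the required $\pwidth\sqrt{t+k}$-NCPD while preserving the yes/no answer. First I would fix the $k$-Steiner tree $T$ guaranteed by the structural observations, namely the one minimizing $\{uv\in E(T)\mid f(u)\neq f(v)\}$ and then the number of Steiner vertices; by Observation~\ref{obs:cell-bd} each cell of $G$ contains at most $\vertconst$ vertices of $V(T)$, and by Observation~\ref{obs:exact} this $T$ is an exact $k'$-Steiner tree for some $k'\leq k$. The total number of cells touched by $T$ is therefore $O(t+k)$, since $|V(T)| = t+k'$ and each cell hosts a bounded number of tree vertices. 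The key quantitative fact I would extract is that $T$ occupies at most $O(t+k)$ cells, hence at most $O(t+k)$ distinct columns (second coordinates) of the grid.

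The main step is the shifting. I would consider partitioning the columns into consecutive groups and, for an offset $a \in \{0,1,\ldots,\Delta-1\}$ where $\Delta = \Theta(\sqrt{t+k})$, delete every column congruent to $a$ modulo $\Delta$ (more precisely every maximal block of deletable columns spaced $\Delta$ apart), producing the induced subgraph $H_a$. Each $H_a$ breaks into strips, each spanning at most $\Delta$ columns; because edges of a clique-grid graph only connect cells whose indices differ by at most $2$ (Definition~\ref{def:clique-grid}), the connected components of $H_a$ live in disjoint strips, and a strip of width $O(\sqrt{t+k})$ admits a $\pwidth\sqrt{t+k}$-NCPD by sweeping cells column-by-column and loading at most $O(\sqrt{t+k})$ cells' worth of cliques into each bag. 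For each $H_a$ and each candidate value $k'\leq k$ I would place $(H_a, R, t, k')$ into $\mathcal{F}$, after first ensuring $R\subseteq V(H_a)$ (strips that fail to contain all terminals, or that separate terminals, simply will not yield a connected solution, so they do no harm; this is the verification I would handle carefully). The number of offsets is $O(\sqrt{t+k})$ and the number of $k'$ values is at most $k+1$, so $|\mathcal{F}| = n^{O(1)}$ and enumeration is well within the $n^{O(\sqrt{t+k})}$ budget; property~1 and property~3 of Definition~\ref{def:good-fam} hold by construction.

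The heart of the correctness argument, and the step I expect to be the main obstacle, is establishing property~2, the answer-preservation. The easy direction is that a yes-instance of \exst on some $H_a\in\mathcal{F}$ immediately gives a $k'$-Steiner tree, hence a $k$-Steiner tree, in $G$, since $H_a$ is an induced subgraph containing $R$. The hard direction requires a counting/averaging argument: given the optimal $T$ occupying $O(t+k)$ columns, I would argue that there exists an offset $a$ for which the set of deleted columns avoids every column used by $T$. Summing over all $\Delta$ offsets, each column of $T$ is deleted by at most $O(1)$ offsets (those in its residue class), so the total number of (offset, used-column) deletion incidences is $O(t+k)$; by averaging, since there are $\Delta = \Theta(\sqrt{t+k})$ offsets and $\Delta^2 = \Theta(t+k)$, at least one offset $a$ deletes no column used by $T$. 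For that $a$, the tree $T$ survives intact inside $H_a$ and witnesses that $(H_a, R, t, k')$ is a yes-instance of \exst. The delicate point to get right is that $T$ lies entirely within a single strip of $H_a$ (so that it remains connected after deletion): since $T$ is connected and touches no deleted column, and deleted columns separate the strips, all of $T$ falls in one component, which is exactly the strip whose NCPD we will later exploit in the dynamic programming of the subsequent lemma. Verifying that the strip width is genuinely $O(\sqrt{t+k})$ — rather than merely bounding the number of used columns — is where I would invest the most care, tuning $\Delta$ so that each inter-deletion gap has width at most $\pwidth\sqrt{t+k}$ as required.
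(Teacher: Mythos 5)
Your overall frame (Baker-style column shifting to get strips of width $O(\sqrt{t+k})$, hence a $\pwidth\sqrt{t+k}$-NCPD) matches the paper, but the core of your construction has a quantitative failure that the paper avoids with an idea you are missing: \emph{guessing}. You propose to delete entire columns in one residue class modulo $\Delta=\Theta(\sqrt{t+k})$ and argue by averaging that some offset deletes \emph{no} column used by $T$. That does not follow: $T$ can occupy $\Theta(t+k)$ distinct columns, and with only $\Theta(\sqrt{t+k})$ residue classes the best you can conclude from your incidence count is that some offset meets $T$ in $O(\sqrt{t+k})$ columns (equivalently, by pigeonhole on the $t+k'$ vertices of $T$, in at most $\sqrt{t+k}$ vertices) --- not in zero. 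To guarantee complete avoidance you would need $\Delta=\Omega(t+k)$, which makes the strips too wide for the claimed NCPD. A second, related problem is that deleting whole columns deletes terminals, violating property~1 of Definition~\ref{def:good-fam} (every $H\in\mathcal{F}$ must contain all of $R$); and your requirement that $T$ survive inside a \emph{single} strip is generally unachievable, since the terminals alone may be spread over $\Omega(n)$ columns.

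The paper's resolution is: pick the label $\ell$ whose columns contain at most $\sqrt{t+k}$ vertices of $T$ (pigeonhole on vertices, not columns), \emph{guess} the set $Y'$ of at most $\sqrt{t+k}$ Steiner vertices of $T$ lying in those columns (this is the source of the $n^{O(\sqrt{t+k})}$ family size --- your family of size $n^{O(1)}$ is a signal that something is off), keep $Y=Y'\cup Y_R$ where $Y_R$ are the terminals in those columns, and delete only the remaining non-terminal vertices there. The strips of width $2\sqrt{t+k}$ columns each get a $\pwidthtwo\sqrt{t+k}$-CPD, and then $Y$ is added to \emph{every} bag, gluing the strips into one $\pwidth\sqrt{t+k}$-CPD of $G\setminus S$; this is why $T$ need not lie in a single strip --- it may cross between strips through vertices of $Y$, which appear in all bags. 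Without the guessing step and the gluing via $Y$, property~2 of the good family cannot be established, so the proposal as written does not prove the lemma.
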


\begin{proof}
 Let $T$ be a $k$-Steiner tree for $G$. In particular, $T$ is an exact $k'$-Steiner tree for some $k' \leq k$ and $V(T) = t+k' \leq t+k$. First, we employ a Baker-like technique similar to~\cite{fomin2019finding} (please refer to Figure~\ref{sub-exp-partition}). Note that if $G$ has $n$ vertices and has representation $f: V(G) \rightarrow [p]\times [p']$, then $p,p' \leq n$. Thus, $f$ represents $G$ on the $n \times n$ grid. First we define a column of the $n\times n$ grid. For any $j\in [n]$ the set of cells $\{(i, j) | i \in [n]\}$ is called a column. There are $n$ columns for the $n\times n$ grid. We partition the $n$ columns of the $n \times n$ grid with $n/2$ blocks of two consecutive columns and label them from the set of labels $[\sqrt{t+k}]$. Formally, each set of consecutive columns $\{2i-1,2i\}$, where $i \in [n/2]$ is labelled with $i \mbox{ mod } \sqrt{t+k}$. Thus, all the two consecutive columns $\{2i-1,2i\}$ are labelled with $i \mbox{ mod } \sqrt{t+k}$. 

Recall that an exact $k'$-Steiner tree $T$ has at most $t+k$ vertices. Applying the pigeonhole principle, there is a label $\ell \in \{1, 2, \ldots ,\sqrt{t+k}\}$ such that the number of vertices from $V(T)$ which are in columns labelled $\ell$ is at most $\sqrt{t+k}$. As we do not know this $k'$-Steiner tree $T$, we guess the Steiner vertices of $V(T)$ which are in the columns labelled $\ell$. The number of potential guesses is bounded by $n^{O(\sqrt{t+k})}$. Suppose $Y'$ is the set of guessed Steiner vertices of $V(T)$ which are in the columns labelled by $\ell$. Then we delete all the non-terminal vertices in columns labelled $\ell$, except the vertices of $Y'$ . Let $S$ be the set of deleted non-terminal vertices. Let $Y_{R}$ be the set of terminal vertices that are in columns labelled by $\ell$. Let $Y = Y' \cup Y_{R}$. Notice that by choice of label $\ell$, $\lvert Y\rvert \leq \sqrt{t+k}$. By Property $2$ of clique-grid graphs, $G\setminus (S\cup Y)$ is a disjoint union of clique-grid graphs each of which
is represented by a function with at most $2 \sqrt{t+k}$ columns. Formally, $G_1 = G[\bigcup_{j=1}^{2(\ell-1)} f^{-1}(*,j)]$ and $G_{i+1} = G[\bigcup_{j=i\cdot 2\ell +1}^{{\sf min}\{i \cdot2\ell +2\sqrt{t+k},n\}} f^{-1}(*,j)]$ for each $i \in \{1,\ldots,n/\sqrt{t+k}\}$. Each $G_i$ is a clique-grid graph with representation $f_i: V(G_i) \rightarrow [n]\times[2\sqrt{t+k}]$ defined as, $f_i(u) = (r, j)$, when $f(u) = (r, (i-1) 2\ell +j)$. Thus, by Property $2$ of Definition~\ref{def:clique-grid}, $G\setminus (S \cup Y) = G_1\uplus \ldots\uplus G_{n/\sqrt{t+k}}$.

\begin{figure}[h]
\centering
\includegraphics[scale=.8]{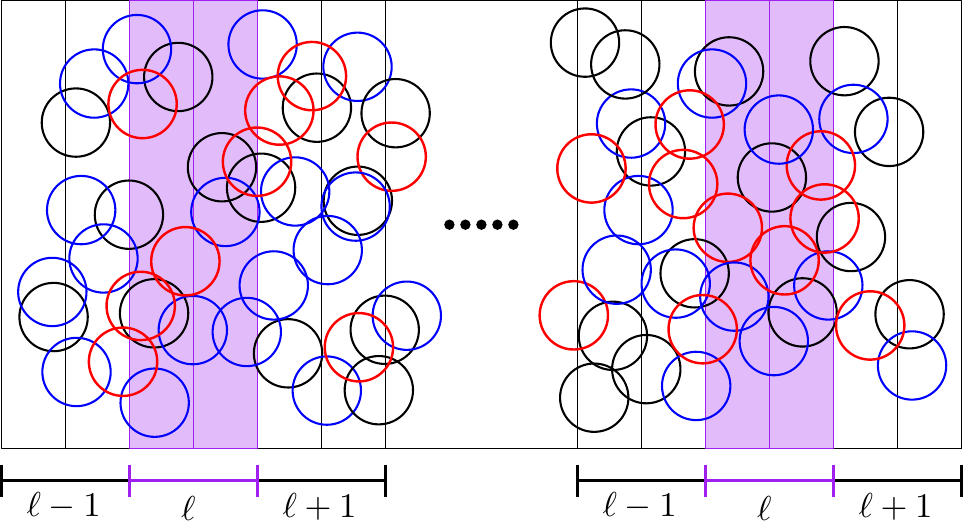}
\caption{An illustration of grid labelling. The blue disks are terminals, and the red and black disks are chosen Steiner vertices and not-chosen non-terminal vertices, respectively.}
\label{sub-exp-partition}
\end{figure}

\begin{claim}\label{clm:NCPD-graph}
 The graph $G \setminus S$ has a $\pwidth \sqrt{t+k}$-NCPD.
\end{claim}

\begin{proof}
Suppose we are able to show that for each $i \in \{1,\ldots, n/\sqrt{t+k}\}$ $G_i$ has a $\pwidthtwo \sqrt{t+k}$-CPD. This results in a $\pwidthtwo \sqrt{t+k}$-CPD for $G \setminus (S \cup Y) = G_1\uplus \ldots\uplus G_{n/\sqrt{t+k}}$. Finally, note that $\vert Y \vert \leq \sqrt{t+k}$ and therefore the vertices of $Y$ can belong to at most $\sqrt{t+k}$ cells. We add $Y$ to all the bags in the $\pwidthtwo\sqrt{t+k}$-CPD for $G \setminus (S \cup Y)$ to obtain a $\pwidth \sqrt{t+k}$-CPD for $G \setminus S$. We convert the $\pwidth\sqrt{t+k}$-CPD of $G \setminus S$ into a NCPD using the known algorithm of~\cite{Bodlaender96alinear}. Note that this results in a $\pwidth \sqrt{t+k}$-NCPD. 

What is left to show is that for each $G_i$ there is a $\pwidthtwo \sqrt{t+k}$-CPD. First, for each $G_i$, we give a path decomposition with the following sequence of bags: $\{X_1,X_2,\ldots, X_{n-2}\}$. This is done by defining each $X_i = f^{-1}(i,*) \cup f^{-1}(i+1,*) \cup f^{-1}(i+2,*)$. It is easy to check that this is a path decomposition of $G_i$. Note that since $G_i$ has at most $2\sqrt{t+k}$ columns, the number of cells contained in each $X_j, j \in [n-1]$ is at most $\pwidthtwo \sqrt{t+k}$. 
\end{proof}

Finally, notice that from the definition of the constructed instances keeping in mind potential $k$-Steiner trees, $(G,R,t,k)$ is a yes-instance of \stree if and only if there is an instance $(H,R,t,k') \in \mathcal{F}$ such that it is a yes-instance of \exst. Thus, accounting for guessing a label $\ell \in [\sqrt{t+k}]$ and the set $Y$ of Steiner vertices and terminal vertices of a potential solution Steiner tree that belong to columns labelled $\ell$, we obtain a good family of $n^{O(\sqrt{t+k})}$ instances for the given instance $(G,R,t,k)$.
\end{proof}

For the ease of our algorithm design, we make a slight modification of the NCPD for a constructed instance $(H,R,t,k') \in \mathcal{F}$: Upon fixing the label $\ell$ and a set $Y$ of terminal vertices and potential Steiner vertices in the columns labelled by $\ell$, we add the set $Y$ in all the bags of the resulting NCPD for $G \setminus S$. Therefore, no bag is empty after this modification. In particular the first and the last bags of the modified path decomposition contain only the set $Y$. Also notice that as $\vert Y \vert \leq \sqrt{t+k}$, the new path decomposition of $H$ is still an $O(\sqrt{t+k})$-CPD. We call this new path decomposition of $H$ a \emph{modified NCPD}. Now, we are ready to prove Lemma~\ref{lem:subexp-CGG}

\begin{proof}[Proof of Lemma~\ref{lem:subexp-CGG}]
As a first step of the algorithm, by Lemma~\ref{lem:compute-good-fam} in $n^{O(\sqrt{t+k})}$ time we compute a good family of instances $\mathcal{F}$ for the given instance $(G,R,t,k)$ of \stree on clique-grid graphs. From Definition~\ref{def:good-fam}(2), $(G,R,t,k)$ is a yes-instance of \stree if and only if there is an instance $(H,R,t,k') \in \mathcal{F}$ that is a yes-instance of \exst. Deriving from Definition~\ref{def:good-fam}(3), Lemma~\ref{lem:compute-good-fam} and the construction of a modified NCPD, for each instance $(H,R,t,k') \in \mathcal{F}$, there is a modified $O(\sqrt{t+k})$-NCPD for $H$, due to a guessed label $\ell$ and a guessed set $Y$ of non-terminal vertices  from columns labelled by $\ell$ such that the following hold: (i) $\vert Y \vert \leq \sqrt{t+k}$, (ii) if $(H,R,t,k')$ is a yes-instance then there is an exact $k'$-Steiner tree $T$ such that all vertices of $Y$ are Steiner vertices in $T$. Let the modified NCPD using the set $Y$ have the sequence of bags $\{X_1,X_2,\ldots, X_q\}$. Recall that the definition of the modified NCPD ensures that $X_1 = X_q = Y$.

In the next step, our algorithm for \stree considers every instance $(H,R,t,k') \in \mathcal{F}$ and checks if it is a yes-instance of \exst. By Definition~\ref{def:good-fam}(2), this is sufficient to determine if $(G,R,t,k)$ is a yes-instance of \stree. 

For the rest of the proof we design a dynamic programming subroutine algorithm $\mathcal{A}$ for \exst that takes as input an instance $(H,R,t,k') \in \mathcal{F}$ and uses its modified $O(\sqrt{t+k})$-NCPD to determine whether it is a yes-instance of \exst. Suppose $(G,R,t,k)$ is a yes-instance and consider a $k$-Steiner tree $T$ for $(G,R,t,k)$ where $\{uv\in E(T) \vert f(u) \neq f(v)\}$ is minimised and then the number of Steiner vertices in $T$ is minimised. Using Observation~\ref{obs:exact}, this is an exact $k'$-Steiner tree of $G$ for some $k' \leq k$. By the construction in Lemma~\ref{lem:compute-good-fam} note that there is an instance $(H,R,t,k') \in \mathcal{F}$ such that $T$ is an exact $k'$-Steiner tree for $(H,R,t,k')$. The aim of the dynamic programming algorithm is to correctly determine that this particular instance $(H,R,t,k')$ is a yes-instance. The algorithm $\mathcal{A}$ is designed in such a manner that for such a yes-instance $(H,R,t,k')$ the tree $T$ will be the potential solution Steiner tree that behaves as a certificate of correctness.

The states of the dynamic programming algorithm store information required to represent the partial solution Steiner tree, which is the potential solution Steiner tree restricted to the graph seen so far. The states are of the form $\mathcal{A}$[$\ell, Q, \mathcal{Q} =Q_1\uplus Q_2 \ldots \uplus Q_b, \mathcal{P} = P_1\uplus\ldots P_b,k''$] where: 
\begin{itemize}
\item $\ell \in [q]$ denotes the index of the bag $X_\ell$ of the modified NCPD of $H$.
\item $Q \subseteq X_\ell \setminus R$ is a set of at most $\vertconst \cdot \pwidth$ non-terminal vertices. For each cell $\mathcal{C} = (i,j)$ that belongs to $X_\ell$, $\vert Q \cap f^{-1}(i,j) \vert \leq \vertconst $. 
\item $\mathcal{Q} = Q_1\uplus Q_2 \ldots \uplus Q_b$ is a partition of $Q$ with the property that for each cell $\mathcal{C} = (i,j)$,  $Q \cap f^{-1}(i,j)$ is contained completely in exactly one part of $\mathcal{Q}$. 
 \item The partition $\mathcal{P}$ is over the vertex set $Q \cup (R \cap X_\ell)$. $Q \cap P_i = Q_i$. Also for each cell $\mathcal{C}$ in $X_\ell$, $\mathcal{C} \cap (Q \cup R)$ is completely contained in exactly one part of $\mathcal{P}$. 
 \item The value $k''$ represents the total number of Steiner vertices used so far in this partial solution Steiner tree. $\vert Q \vert \leq k''$ holds. 
\end{itemize}

Essentially, let $T$ be an exact $k'$-Steiner tree for $(H,R,t,k')$ if it is a yes-instance. For $\ell \in [q]$, let $T_{\sf ptl}^\ell$ represent the partial solution Steiner tree when $T$ is restricted to $H[\bigcup_{j=1}^{\ell} X_j]$. The partition $\mathcal{P}$ represents the intersection of a component of $T_{\sc ptl}^{\ell}$  with $X_\ell$. The set $Q$ is the set of Steiner vertices of $T_{\sf ptl}^{\ell}$ in the bag $X_\ell$ and $\mathcal{Q}$ is the partition of $Q$ with respect to the components of $T_{\sc ptl}^{\ell}$. The number $k''$ denotes the total number of Steiner vertices in $T_{\sf ptl}^\ell$.

In order to show the correctness of $\mathcal{A}$ we need to maintain the following invariant throughout the algorithm:
(LHS) $\mathcal{A}[\ell,Q,\mathcal{Q}=Q_1 \uplus Q_2,\ldots Q_b,\mathcal{P} = P_1\uplus P_2 \uplus P_b,k']=1$ if and only if (RHS) there is a forest $T'$ as a subgraph of $H[\bigcup_{j=1}^{\ell}]$ with $b$ connected components $D_1,\ldots,D_b$: $D_i \cap X_\ell = P_i$, $(D_i \setminus R)  \cap X_\ell = Q_i$, the total number of non-terminal points in $T'$ is $k''$, for each cell $\mathcal{C}$ the number of nonterminal vertices in $\mathcal{C} \cap T'$ is at most $\vertconst$, and $R \cap (\bigcup_{j=1}^{\ell} X_j) \subseteq V(T')$.

Suppose the algorithm invariant is true. This means that if $\mathcal{A}[q,Y, {Y},{Y},k']=1$ then there is an exact $k'$-Steiner tree for $(H,R,t,k')$. On the other hand, suppose $(G,R,t,k)$ is a yes-instance and has a $k$-Steiner tree $T$ where $\{uv\in E(T) \vert f(u) \neq f(v)\}$ is minimised and then the number of Steiner vertices in $T$ is minimised. By Observation~\ref{obs:cell-bd}, the number of Steiner vertices of $T$ in each cell of $G$ is bounded by $\vertconst$. By Observation~\ref{obs:exact} and the construction in Lemma~\ref{lem:compute-good-fam} note that there is a subset $Y$ and an instance $(H,R,t,k') \in mathcal{F}$ such that $T$ is an exact $k'$-Steiner tree for $(H,R,t,k')$ and $Y \subseteq V(T)$. Suppose the invariant of the algorithm is true. This means that if $(G,R,t,k)$ is a yes-instance of \stree then there is a $(H,R,t,k')$ for which $\mathcal{A}[q,Y,{Y},{Y},k']=1$.

Thus, proving the correctness of the algorithm $\mathcal{A}$ amounts to proving the correctness of the invariant of $\mathcal{A}$. We prove the correctness of the invariant by induction on $\ell$.
If $\ell =1$ then $X_\ell$ must be a {\bf leaf bag}. By definition of the modified NCPD, the bag contains $Y$. 

$\mathcal{A}[1,Q,\mathcal{Q},\mathcal{P},k''] = 1$ if $Q = Y$, $\mathcal{Q}$ is the partition of $Y$ into the connected components in $H[Y]$, $\mathcal{P} = \mathcal{Q}$, $k'' = \vert Y \vert$. In all other cases, $\mathcal{A}[1,Q,\mathcal{Q},\mathcal{P},k''] = 0$. 

First, suppose $\mathcal{A}[1,Q,\mathcal{Q},\mathcal{P},k''] = 1$. Then as $X_1$ does not contain any terminal vertices, (RHS) trivially is true for the cases when $\mathcal{A}[1,Q,\mathcal{Q},\mathcal{P},k''] = 1$. On the other hand, suppose (RHS) is true for $\ell = 1$. Again considering the cases when $\mathcal{A}[1,Q,\mathcal{Q},\mathcal{P},k''] = 1$, (LHS) holds. So the invariant holds when $\ell =1$.

Now, we assume that $\ell >1$. Our induction hypothesis is that the invariant of the algorithm is true for all $1 \leq \ell' <\ell$. We show that the invariant is true for $\ell$. There can be two cases:

\vspace{-.2cm}

\subparagraph*{Case 1:}
$X_\ell$ is a {\bf forget bag} with exactly one child $X_{\ell-1}$ : Let $\mathcal{C}$ be the cell being forgotten in $X_\ell$. Consider $\mathcal{A}[\ell,Q, \mathcal{Q}=Q_1,\ldots Q_b,\mathcal{P}=P_1\ldots P_b,k'']$.

Let $Q' \subseteq X_{\ell-1} \setminus R$ such that $Q \subseteq Q'$ and $Q' \setminus Q$ consists of a set of at most $\vertconst$ non-terminal vertices from $\mathcal{C}$. Let $\mathcal{P}'=P'_1 \ldots P'_b$ be a partition of $(Q' \cup R) \cap X_{\ell-1})$ such for each cell $\mathcal{C}'$ in $X_{\ell-1}$, $\mathcal{C}' \cap (Q' \cup R)$ is completely contained in exactly one part. Also, $P_i = P_i' \setminus \mathcal{C}$. Moreover, consider the part $P'_i$ such that $\mathcal{C} \cap (Q' \cup R) \subseteq P'_i$: $P'_i \setminus (\mathcal{C} \cap (Q' \cup R)) \neq \emptyset$. Let $\mathcal{Q}'$ be the partition of $Q'$ such that $Q' \cap P'_i = Q'_i$. 
If $\mathcal{A}[\ell-1,Q',\mathcal{Q}',\mathcal{P}',k''] = 1$ then $\mathcal{A}[\ell,Q,\mathcal{Q},\mathcal{P},k'']=1$. Otherwise, $\mathcal{A}[\ell,Q,\mathcal{P},k'']=0$.

Suppose (LHS) of the invariant is true for $\mathcal{A}[\ell,Q,\mathcal{Q},\mathcal{P},k'']$: $\mathcal{A}[\ell,Q,\mathcal{Q},\mathcal{P},k'']=1$.  By definition, there is a $\mathcal{A}[\ell-1,Q',\mathcal{Q}',\mathcal{P}',k''] = 1$ for a $Q',\mathcal{Q}',\mathcal{P}'$ as described above. By induction hypothesis, (RHS) corresponding to $\mathcal{A}[\ell-1,Q',\mathcal{Q}',\mathcal{P}',k''] = 1$ holds. Thus, there is a witness forest $T'$ in $H[\bigcup_{j=1}^{\ell-1} X_j] = H[\bigcup_{j=1}^{\ell}]$ (By definition of a forget bag). By definition of $Q,\mathcal{Q},\mathcal{P}$, $T'$ is also a witness forest in $H[\bigcup_{j=1}^{\ell} X_j]$ and therefore (RHS) is true for $\mathcal{A}[\ell,Q,\mathcal{Q},\mathcal{P},k'']$.

On the other hand, suppose (RHS) is true for $\mathcal{A}[\ell,Q,\mathcal{Q},\mathcal{P},k'']$. Then there is a witness forest $T'$ in $H[\bigcup_{j=1}^{\ell} X_j] = H[\bigcup_{j=1}^{\ell-1}]$. Moreover, $T'$ has $b$ connected components $D_1,\ldots,D_b$: $D_i \cap X_\ell = P_i$, $(D_i \setminus R)  \cap X_\ell = Q_i$, the total number of non-terminal points in $T'$ is $k''$ and $R \cap (\bigcup_{j=1}^{\ell} X_j) \subseteq V(T')$. Let $D_i \cap X_{\ell -1} =P'_i$, $(D_i \setminus R)  \cap X_{\ell-1} = Q'_i$, $Q' = \bigcup_{j=1}^{b} Q'_i$. Note that the total number of non-terminal points in $T'$ is $k''$ and by definition of a forget node it is still true that $R \cap (\bigcup_{j=1}^{\ell-1} X_j) \subseteq V(T')$. By induction hypothesis, (LHS) is true for $\mathcal{A}[\ell-1,Q',\mathcal{Q}',\mathcal{P}',k'']$ and $\mathcal{A}[\ell-1,Q',\mathcal{Q}',\mathcal{P}',k'']=1$. By the description above, this implies that $\mathcal{A}[\ell,Q,\mathcal{Q},\mathcal{P},k'']=1$. Therefore, (LHS) is true for $\mathcal{A}[\ell,Q,\mathcal{Q},\mathcal{P},k'']$. 

\vspace{-.2cm}
\subparagraph*{Case 2:}
$X_\ell$ is an {\bf introduce bag} with exactly one child $X_{\ell -1}$. Let $\mathcal{C}$ be the cell being introduced in $X_\ell$. Consider $\mathcal{A}[\ell,Q, \mathcal{Q}=Q_1,\ldots Q_b,\mathcal{P}=P_1\ldots P_b,k'']$. Without loss of generality, let $P_b$ contain all the vertices in $\mathcal{C} \cap (Q \cup R)$.

By definition of a state, $\vert \mathcal{C} \cap Q \vert \leq \vertconst$. Let ${\sf St} = \mathcal{C} \cap Q$ and $Q' = Q \setminus {\sf St}$. 
Let $\mathcal{P}' = P'_1\uplus P'_2\ldots \uplus P'_b \uplus \ldots P'_{d}$ be a partition of $Q' \cup (R \cap X_{\ell-1})$ such that for $j < b, P_j = P'_j$, and $P_b = \mathcal{C} \cap (Q \cup R) \cup \bigcup_{j=b}^{d} P'_j$. Moreover, $\mathcal{C} \cap (Q \cup R)$ has a neighbour in each $P'_j, b \leq j \leq d$. 
Let $\mathcal{Q}'$ be the partition of $Q'$ such that $Q' \cap P'_i = Q'_i$. 
Let $k^* = k'' - \vert {\sf St} \vert$.
If $\mathcal{A}[\ell-1,Q',\mathcal{Q}',\mathcal{P}',k^*] = 1$ then $\mathcal{A}[\ell,Q,\mathcal{Q},\mathcal{P},k'']=1$. Otherwise, $\mathcal{A}[\ell,Q,\mathcal{P},k'']=0$.

Suppose (LHS) of the invariant is true for $\mathcal{A}[\ell,Q,\mathcal{Q},\mathcal{P},k'']$: $\mathcal{A}[\ell,Q,\mathcal{Q},\mathcal{P},k'']=1$.  By definition, there is a $\mathcal{A}[\ell-1,Q',\mathcal{Q}',\mathcal{P}',k^*] = 1$ for a $Q',\mathcal{Q}',\mathcal{P}'$ as described above. By induction hypothesis, (RHS) corresponding to $\mathcal{A}[\ell-1,Q',\mathcal{Q}',\mathcal{P}',k^*] = 1$ holds. Thus, there is a witness forest $T'$ in $H[\bigcup_{j=1}^{\ell-1} X_j]$. By definition of $Q,\mathcal{Q},\mathcal{P}$, $H[V(T') \cup (\mathcal{C} \cap (Q \cup R))]$ is a connected graph. Consider a spanning tree of this connected graph. By definition of $k^*$, this spanning tree has all vertices of $R$ and exactly $k''$ non-terminal vertices. Therefore, this spanning tree is a witness forest in $H[\bigcup_{j=1}^{\ell} X_j]$ and therefore (RHS) is true for $\mathcal{A}[\ell,Q,\mathcal{Q},\mathcal{P},k'']$.

On the other hand, suppose (RHS) is true for $\mathcal{A}[\ell,Q,\mathcal{Q},\mathcal{P},k'']$. Then there is a witness forest $T'$ in $H[\bigcup_{j=1}^{\ell} X_j]$. Moreover, $T'$ has $b$ connected components $D_1,\ldots,D_b$: $D_i \cap X_\ell = P_i$, $(D_i \setminus R)  \cap X_\ell = Q_i$, the total number of non-terminal points in $T'$ is $k''$ and $R \cap (\bigcup_{j=1}^{\ell} X_j) \subseteq V(T')$. Without loss of generality, let $D_b$ contain $T' \cap \mathcal{C}$. Let $D'_1,D'_2,\ldots D'_b,\ldots, D'_d$ be the connected components of $T'$ restricted to $H[\bigcup_{j=1}^{\ell-1} X_j]$. Let $D'_i \cap X_{\ell -1} =P'_i$, $(D'_i \setminus R)  \cap X_{\ell-1} = Q'_i$, $Q' = \bigcup_{j=1}^{d} Q'_i$. Note that the total number of non-terminal points in $T'$ is $k^* = k'' - \vert {\sf St} \vert $ and by definition of an introduce node it is true that $R \cap (\bigcup_{j=1}^{\ell-1} X_j) \subseteq V(T') \cap (\bigcup_{j=1}^{\ell-1} X_j)$. By induction hypothesis, (LHS) is true for $\mathcal{A}[\ell-1,Q',\mathcal{Q}',\mathcal{P}',k^*]$ and $\mathcal{A}[\ell-1,Q',\mathcal{Q}',\mathcal{P}',k^*]=1$. By the description above, this implies that $\mathcal{A}[\ell,Q,\mathcal{Q},\mathcal{P},k'']=1$. Therefore, (LHS) is true for $\mathcal{A}[\ell,Q,\mathcal{Q},\mathcal{P},k'']$. 

Finally, we analyse the time complexity of the algorithm. First, the good family $\mathcal{F}$ is computed in $n^{O(\sqrt{t+k})}$ time as per Lemma~\ref{lem:compute-good-fam}, and the number of instances in the good family $\mathcal{F}$ is $n^{O(\sqrt{t+k})}$. For one such instance $(H,R,t,k')$ the possible states for the algorithm $\mathcal{A}$ are of the form $[\ell, Q,\mathcal{Q},\mathcal{P},k'']$. By definition, $\ell \leq n$, $k'' \leq k'$ and $Q = O(\sqrt{t+k})$. Again, by definition $\mathcal{P}$ is upper bounded by the number of partitions of cells contained in a bag of the modified NCPD of $(H,R,t,k')$. Thus, the number of possibilities of $\mathcal{P}$ is $\sqrt{t+k}^{O(\sqrt{t+k})})$. Also by definition, $\mathcal{Q}$ is fixed once $Q$ and $\mathcal{P}$ are fixed. Therefore, the number of possible states is $n^{O(\sqrt{t+k})}$. From the description of $\mathcal{A}$, the computation of $\mathcal{A}[\ell,Q,\mathcal{Q},\mathcal{P},k'']$ may look up the solution for $n^{O(\sqrt{t+k})}$ instances of the form $\mathcal{A}[\ell-1,Q',\mathcal{Q}',\mathcal{P}',k^*]$ and therefore takes $n^{O(\sqrt{t+k})}$ time. Thus, the total time for the dynamic programming is $O(n^{\sqrt{t+k}})$. 
\end{proof}

\old{
\begin{proof}[Proof Sketch]
As a first step of the algorithm, by Lemma~\ref{lem:compute-good-fam} in $n^{O(\sqrt{t+k})}$ time we compute a good family of instances $\mathcal{F}$ for the given instance $(G,R,t,k)$ of \stree on clique-grid graphs. From Definition~\ref{def:good-fam}(2), $(G,R,t,k)$ is a yes-instance of \stree if and only if there is an instance $(H,R,t,k') \in \mathcal{F}$ that is a yes-instance of \exst. Deriving from Definition~\ref{def:good-fam}(3), Lemma~\ref{lem:compute-good-fam} and the construction of a modified NCPD, for each instance $(H,R,t,k') \in \mathcal{F}$, there is a modified $O(\sqrt{t+k})$-NCPD for $H$, due to a guessed label $\ell$ and a guessed set $Y$ of non-terminal vertices  from columns labelled by $\ell$ such that the following hold: (i) $\vert Y \vert \leq \sqrt{t+k}$, (ii) if $(H,R,t,k')$ is a yes-instance then there is an exact $k'$-Steiner tree $T$ such that all vertices of $Y$ are Steiner vertices in $T$. Let the modified NCPD using the set $Y$ have the sequence of bags $\{X_1,X_2,\ldots, X_q\}$. Recall that the definition of the modified NCPD ensures that $X_1 = X_q = Y$.

In the next step, our algorithm for \stree considers every instance $(H,R,t,k') \in \mathcal{F}$ and checks if it is a yes-instance of \exst. By Definition~\ref{def:good-fam}(2), this is sufficient to determine if $(G,R,t,k)$ is a yes-instance of \stree. 

For the rest of the proof we design a dynamic programming subroutine algorithm $\mathcal{A}$ for \exst that takes as input an instance $(H,R,t,k') \in \mathcal{F}$ and uses its modified $O(\sqrt{t+k})$-NCPD to determine whether it is a yes-instance of \exst. Suppose $(G,R,t,k)$ is a yes-instance and consider a $k$-Steiner tree $T$ for $(G,R,t,k)$ where $\{uv\in E(T) \vert f(u) \neq f(v)\}$ is minimised and then the number of Steiner vertices in $T$ is minimised. Using Observation~\ref{obs:exact}, this is an exact $k'$-Steiner tree of $G$ for some $k' \leq k$. By the construction in Lemma~\ref{lem:compute-good-fam} note that there is an instance $(H,R,t,k') \in \mathcal{F}$ such that $T$ is an exact $k'$-Steiner tree for $(H,R,t,k')$. The aim of the dynamic programming algorithm is to correctly determine that this particular instance $(H,R,t,k')$ is a yes-instance. The algorithm $\mathcal{A}$ is designed in such a manner that for such a yes-instance $(H,R,t,k')$ the tree $T$ will be the potential solution Steiner tree that behaves as a certificate of correctness.

The states of the dynamic programming algorithm store information required to represent the partial solution Steiner tree, which is the potential solution Steiner tree restricted to the graph seen so far. The states are of the form $\mathcal{A}$[$\ell, Q, \mathcal{Q} =Q_1\uplus Q_2 \ldots \uplus Q_b, \mathcal{P} = P_1\uplus\ldots P_b,k''$] where: 
\begin{itemize}
\item $\ell \in [q]$ denotes the index of the bag $X_\ell$ of the modified NCPD of $H$.
\item $Q \subseteq X_\ell \setminus R$ is a set of at most $\vertconst \cdot \pwidth$ non-terminal vertices. For each cell $\mathcal{C} = (i,j)$ that belongs to $X_\ell$, $\vert Q \cap f^{-1}(i,j) \vert \leq \vertconst $. 
\item $\mathcal{Q} = Q_1\uplus Q_2 \ldots \uplus Q_b$ is a partition of $Q$ with the property that for each cell $\mathcal{C} = (i,j)$,  $Q \cap f^{-1}(i,j)$ is contained completely in exactly one part of $\mathcal{Q}$. 
 \item The partition $\mathcal{P}$ is over the vertex set $Q \cup (R \cap X_\ell)$. $Q \cap P_i = Q_i$. Also for each cell $\mathcal{C}$ in $X_\ell$, $\mathcal{C} \cap (Q \cup R)$ is completely contained in exactly one part of $\mathcal{P}$. 
 \item The value $k''$ represents the total number of Steiner vertices used so far in this partial solution Steiner tree. $\vert Q \vert \leq k''$ holds. 
\end{itemize}

Essentially, let $T$ be an exact $k'$-Steiner tree for $(H,R,t,k')$ if it is a yes-instance. For $\ell \in [q]$, let $T_{\sf ptl}^\ell$ represent the partial solution Steiner tree when $T$ is restricted to $H[\bigcup_{j=1}^{\ell} X_j]$. The partition $\mathcal{P}$ represents the intersection of a component of $T_{\sf ptl}^{\ell}$  with $X_\ell$. The set $Q$ is the set of Steiner vertices of $T_{\sf ptl}^{\ell}$ in the bag $X_\ell$ and $\mathcal{Q}$ is the partition of $Q$ with respect to the components of $T_{\sf ptl}^{\ell}$. The number $k''$ denotes the total number of Steiner vertices in $T_{\sf ptl}^\ell$.

In order to show the correctness of $\mathcal{A}$ we need to maintain the following invariant throughout the algorithm:
(LHS) $\mathcal{A}[\ell,Q,\mathcal{Q}=Q_1 \uplus Q_2,\ldots Q_b,\mathcal{P} = P_1\uplus P_2 \uplus P_b,k']=1$ if and only if (RHS) there is a forest $T'$ as a subgraph of $H[\bigcup_{j=1}^{\ell}]$ with $b$ connected components $D_1,\ldots,D_b$: $D_i \cap X_\ell = P_i$, $(D_i \setminus R)  \cap X_\ell = Q_i$, the total number of non-terminal points in $T'$ is $k''$, for each cell $\mathcal{C}$ the number of nonterminal vertices in $\mathcal{C} \cap T'$ is at most $\vertconst$, and $R \cap (\bigcup_{j=1}^{\ell} X_j) \subseteq V(T')$.

Suppose the algorithm invariant is true. This means that if $\mathcal{A}[q,Y, {Y},{Y},k']=1$ then there is an exact $k'$-Steiner tree for $(H,R,t,k')$. On the other hand, suppose $(G,R,t,k)$ is a yes-instance and has a $k$-Steiner tree $T$ where $\{uv\in E(T) \vert f(u) \neq f(v)\}$ is minimised and then the number of Steiner vertices in $T$ is minimised. By Observation~\ref{obs:cell-bd}, the number of Steiner vertices of $T$ in each cell of $G$ is bounded by $\vertconst$. By Observation~\ref{obs:exact} and the construction in Lemma~\ref{lem:compute-good-fam} note that there is a subset $Y$ and an instance $(H,R,t,k') \in \mathcal{F}$ such that $T$ is an exact $k'$-Steiner tree for $(H,R,t,k')$ and $Y \subseteq V(T)$. Suppose the invariant of the algorithm is true. This means that if $(G,R,t,k)$ is a yes-instance of \stree then there is a $(H,R,t,k')$ for which $\mathcal{A}[q,Y,{Y},{Y},k']=1$.

Thus, proving the correctness of the algorithm $\mathcal{A}$ amounts to proving the correctness of the invariant of $\mathcal{A}$. We prove the correctness of the invariant by induction on $\ell$. Due to paucity of space we defer the full proof including the correctness of the algorithm invariant and the running time analysis to Appendix~\ref{secapp:subexp}.
\end{proof}
}

\section{FPT Algorithm for \stree on Unit Disk Graphs}\label{sec:FPT}

In this section, we prove Theorem~\ref{thm:FPT-UDG}. We consider the \stree problem on unit disk graphs and design an FPT algorithm parameterized by $k$, which is an upper bound on the number of Steiner vertices in the solution Steiner tree. Our algorithm is based on the idea that for an instance $(G,R,t,k)$, in order to determine the existence of a Steiner tree we can first find spanning trees for all components of $G[R]$ and extend these spanning trees to a required $k$-Steiner tree.  

In fact, we prove our results for the superclass of clique-grid graphs. For an instance $(G,R,t,k)$ of \stree on clique-grid graphs, where $G$ has $n$ vertices and $R\subseteq V(G)$ is the set of terminals we prove the following result in this rest of this section. 

\begin{lemma}\label{lem:FPT}
\stree on clique-grid graphs has an FPT algorithm with running time $2^{O(k)}n^{O(1)}$.
\end{lemma}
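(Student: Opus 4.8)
The plan is to reduce \stree on clique-grid graphs to an instance where the number of terminals is bounded by a function of $k$, so that a terminal-parameterized algorithm (Dreyfus--Wagner) runs in the desired time. The paper's own overview suggests contracting the connected components of $G[R]$; I would follow that outline. First I would build a graph $G'$ from $G$ by contracting each connected component of $G[R]$ into a single new terminal vertex, deleting resulting multi-edges and self-loops, and keeping all Steiner vertices of $V(G)\setminus R$ as the Steiner vertices of $G'$. Denote by $R'$ the set of contracted terminals and $t'=|R'|$. The key correctness claim I must establish is that $(G,R,t,k)$ is a yes-instance of \stree if and only if $(G',R',t',k)$ is. The forward direction is routine: any $k$-Steiner tree $T$ for $(G,R,t,k)$ survives contraction, and since contracting edges inside a connected terminal block cannot increase the Steiner count and keeps the image connected and spanning $R'$, one extracts a $k$-Steiner tree in $G'$. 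The reverse direction needs a lift: given a $k$-Steiner tree $T'$ in $G'$, I would re-expand each contracted terminal back into its component of $G[R]$; because each such component is connected \emph{inside the terminals only}, I can internally span it using no Steiner vertices, and splice these spanning structures at the points where $T'$ attached to the contracted node, yielding a connected subgraph of $G$ spanning $R$ with at most $k$ Steiner vertices (then take a spanning tree).

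\textbf{Bounding the number of terminals.} The heart of the lemma, and the place where the clique-grid structure must be used, is showing $t'=|R'|$ is bounded by a function of $k$ \emph{whenever the instance can be a yes-instance}. After contraction, each terminal of $R'$ corresponds to a distinct connected component of $G[R]$. In any $k$-Steiner tree $T'$ of $G'$, the $t'$ contracted terminals must be connected through at most $k$ Steiner vertices. Here I would invoke the clique-grid geometry exactly as in Observations~\ref{obs:cell-cell-edge} and~\ref{obs:cell-bd}: by Definition~\ref{def:clique-grid}, each cell is a clique and edges only join cells within distance $2$ in each coordinate, so each cell has a bounded number ($\leq \cellconst$) of relevant neighbouring cells. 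The structural point I would prove is that distinct terminal components that are adjacent in $G'$ must lie in geometrically close cells, and that a single Steiner vertex (living in one cell) can be adjacent to terminal components in only a bounded number of cells. Consequently, in a tree on $t'$ terminals and $\leq k$ Steiner vertices, the terminals that are leaves or low-degree must attach near Steiner vertices; a counting argument over the bounded cell-neighbourhoods of the $k$ Steiner vertices caps how many distinct contracted terminals can participate, giving $t' = O(k)$. If instead $t' > c\cdot k$ for the appropriate constant $c$, I would immediately declare the instance a \no-instance.

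\textbf{Finishing with Dreyfus--Wagner.} Once $t' = O(k)$ in any surviving yes-instance, I would run the Dreyfus--Wagner algorithm~\cite{dreyfus1971steiner} on $(G',R',t',k)$, which solves \stree parameterized by the number of terminals $t'$ in time $3^{t'}n^{O(1)} = 2^{O(k)}n^{O(1)}$, and check whether the minimum Steiner tree uses at most $k$ Steiner vertices. Combining this with the equivalence of the two instances yields the claimed $2^{O(k)}n^{O(1)}$ running time, since contraction and the preliminary terminal-count check are polynomial.

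\textbf{Main obstacle.} The delicate step is the terminal bound $t'=O(k)$: contraction discards geometry (as the paper warns, ``$G'$ loses all geometric properties''), so the bound cannot be proved on $G'$ alone and must be argued \emph{before} contracting, using the clique-grid representation $f$ of $G$. The crux is the claim that each Steiner vertex can serve as a connector for only boundedly many distinct terminal components through its bounded cell-neighbourhood, together with the observation that adjacent terminal components need a Steiner vertex (or a direct terminal-terminal edge, which would have merged them into one component) to be joined. Making the leaf/degree counting over the Steiner vertices fully rigorous, and pinning down the exact constant in $t' \leq c\cdot k$, is where I expect the real work to lie.
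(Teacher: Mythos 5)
Your proposal follows the paper's proof essentially step for step: contract the connected components of $G[R]$ into single terminals, bound their number by $O(k)$ using that each cell of the clique-grid representation is a clique (hence meets at most one terminal component) while each Steiner vertex's neighbourhood spans at most a constant number of cells, and finish by running Dreyfus--Wagner on the contracted instance. The only remark worth adding is that the terminal bound is simpler than the leaf/degree counting you anticipate: since distinct components of $G[R]$ share no terminal--terminal edges, every component must be adjacent to at least one of the at most $k$ Steiner vertices of the solution tree, and each Steiner vertex is adjacent to at most $24$ components, which immediately yields $q\le 24k$ exactly as in Observation~\ref{obs:terminal-comps}.
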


First, we prove some properties of Steiner trees for unit disk graphs. Consider the induced subgraph $G[R]$. Let $C_1,C_2,\ldots,C_q$ be the connected components in $G[R]$. For each $C_i$, $i \in [q]$, let $T_i$ be a spanning tree of $C_i$.

\begin{observation}\label{obs:terminal-comps}
 Let $G$ be a clique-grid graph with the terminal set $R$. Let $C_1,C_2,\ldots,C_q$ be the connected components of $G[R]$, and for each $i \in [q]$ let $T_i$ be a spanning tree for each $C_i$. For any $k$, let $T'$ be a $k$-Steiner tree for $G$. Then there is a $k$-Steiner tree $T$ such that for each $i \in [q]$ $T_i$ is a subtree of $T$. Moreover, $q \leq \compconst k$.
\end{observation}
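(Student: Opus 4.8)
The statement of Observation~\ref{obs:terminal-comps} has two parts that I would prove separately. The first part asserts that any $k$-Steiner tree can be modified into a $k$-Steiner tree that contains each fixed spanning tree $T_i$ as a subtree. The second part bounds the number of terminal components by $\compconst k$.

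\textbf{Part 1: Realigning the Steiner tree to contain the $T_i$.}
The plan is to start with the given $k$-Steiner tree $T'$ and transform it component by component. Fix a component $C_i$ of $G[R]$. Since $T'$ spans all terminals, it contains every vertex of $C_i$, but the way $T'$ connects these vertices need not coincide with $T_i$. The key observation is that $G[V(C_i)]$ is connected (indeed $C_i$ is a connected component of $G[R]$), so I can freely swap the internal connection pattern. Concretely, I would consider the subgraph of $T'$ restricted to $V(C_i)$; replacing the edges of $T'$ inside $V(C_i)$ by the edges of $T_i$ keeps every terminal of $C_i$ connected to the rest of the tree, provided this swap does not create cycles or disconnect the tree. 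The clean way to argue this is to contract: contract each component $C_i$ to a single vertex in $T'$, obtaining a tree $T'/\{C_i\}$ on the contracted vertices together with the Steiner vertices. Then re-expand each contracted vertex using the spanning tree $T_i$ internally, and attach the external edges of $T'$ (those leaving $C_i$) to appropriate vertices of $C_i$. Since contraction of a connected subgraph of a tree yields a tree, and re-expanding along a spanning tree of that subgraph yields a tree again, the result $T$ is a tree spanning all terminals with the same Steiner vertex set as $T'$; hence $T$ is a $k$-Steiner tree, and each $T_i$ is a subtree of $T$ by construction. The one point requiring care is that after contraction an external edge $uu'$ of $T'$ with $u \in V(C_i)$ might, upon re-expansion, need a specific endpoint in $V(C_i)$; since that endpoint $u$ still exists and $T_i$ spans all of $V(C_i)$, connectivity is preserved, so no new Steiner vertices are introduced.

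\textbf{Part 2: Bounding $q \le \compconst k$.}
For this I would use the geometric (clique-grid) structure via the earlier observations. The plan is to look at a fixed $k$-Steiner tree $T$ (say one minimizing the number of inter-cell edges, as in Observation~\ref{obs:cell-cell-edge}) and count how many distinct terminal components it must glue together. Each component $C_i$ lies entirely within terminals, but to connect two different components $C_i$ and $C_j$ the tree $T$ must route through Steiner vertices or through inter-cell adjacencies. If no Steiner vertices were used, each component would stay isolated (distinct components of $G[R]$ have no edges between them by definition), so the $k$ Steiner vertices are exactly what merges the $q$ components into one tree. Contracting each component to a point, $T$ induces a tree on $q + (\text{number of Steiner vertices})$ nodes in which the $q$ component-nodes are connected only through Steiner vertices; since in a tree the number of leaves and branching is controlled, and each Steiner vertex has bounded degree in the contracted structure (bounded by the per-cell edge count $\edgeconst$ from Observation~\ref{obs:cell-cell-edge} together with the constant number of neighboring cells), each Steiner vertex can be incident to only $O(1)$ component-nodes. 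A counting argument then gives $q = O(k)$, and chasing the explicit constants yields $q \le \compconst k$.

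\textbf{Main obstacle.}
I expect the delicate step to be Part 2, specifically pinning down the exact constant \compconst. The qualitative bound $q = O(k)$ is immediate from ``$k$ Steiner vertices must connect $q$ otherwise-disconnected terminal blobs,'' but extracting the precise factor requires a careful degree/adjacency count that leans on the clique-grid bounds: each Steiner vertex sits in one cell, interacts with a bounded number of cells (the $5\times 5$ neighborhood), and by the minimality argument of Observation~\ref{obs:cell-cell-edge} has at most \edgeconst\ inter-cell incidences. Converting these local bounds into a global inequality $q \le \compconst k$ is where the real bookkeeping lies, and I would want to be careful that the contracted tree argument does not double-count components that are joined through a shared Steiner vertex versus through a chain of Steiner vertices. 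Part 1, by contrast, is a routine tree-surgery argument once phrased via contraction and re-expansion.
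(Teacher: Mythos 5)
Your overall two-part structure matches the paper's proof, and Part 2 is essentially the paper's argument. But Part 1 contains a claim that is false as stated: you assert that contracting each component $C_i$ inside $T'$ yields ``a tree $T'/\{C_i\}$,'' justified by ``contraction of a connected subgraph of a tree yields a tree.'' The set $V(C_i)$ is connected in $G[R]$, but it need not induce a connected subgraph of $T'$: the tree $T'$ may join two terminals of the same component $C_i$ through a Steiner vertex rather than through an edge of $G[R]$. Contracting such a set creates a cycle (take $C_i=\{u,v\}$ with $uv\in E(G[R])$ but with $T'$ containing the path $u$--$w$--$v$ for a Steiner vertex $w$; contraction produces a double edge between $c_i$ and $w$). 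So the contracted object is only a connected multigraph, and your re-expansion step is not yet justified. The repair is easy --- take a spanning tree of the contracted multigraph before re-expanding --- but note that the paper avoids contraction altogether: it forms $H = T' \cup \bigcup_{i} T_i$, which is connected, and greedily deletes edges of $E(H)\setminus\bigcup_i E(T_i)$ while connectivity is preserved; any cycle surviving in the final graph would have to contain a still-deletable non-$T_i$ edge, so the result is a tree that contains every $T_i$ and uses exactly the Steiner vertices of $T'$.

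For Part 2 you reach the right inequality, but you route it through Observation~\ref{obs:cell-cell-edge} and a minimality assumption on $T$, neither of which is needed, and this is where your worry about ``careful bookkeeping'' dissolves. The paper's count is purely about adjacency in $G$: since each cell is a clique, a cell meets at most one component of $G[R]$; since every neighbour of a vertex lies in one of at most \compconst cells by Property~2 of Definition~\ref{def:clique-grid}, a single Steiner vertex can be adjacent to at most \compconst components of $G[R]$; and since distinct components of $G[R]$ have no edges between them, every $C_i$ must be adjacent in $T$ to at least one Steiner vertex (when $q\ge 2$). Counting component--Steiner-vertex adjacencies then gives $q \leq \compconst k$ directly; there is no double-counting issue of the kind you raise, because the argument only needs each component to be charged to one of its adjacent Steiner vertices.
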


\begin{proof}
 Consider the $k$-Steiner tree $T$ and let $S = V(T) \setminus R$ be the set of Steiner vertices of $T$. Note that in $G[R \cup S]$, $T$ is a spanning tree and therefore $G[R \cup S]$ is a connected graph. Similarly, for each $i \in [q]$, $T_i$ is a subgraph of $G[R \cup S]$. Consider the subgraph $H = T' \cup \bigcup_{i\in[q]} T_i$. As $T'$ is a spanning tree, $T' \cup \bigcup_{i\in[q]} T_i$ is a connected graph. We consider an arbitrary ordering $\mathcal{O}$ of the edges in $E(H) \setminus (\bigcup_{i\in [q]} E(T_i))$. In this order we iteratively throw away an edge $e_j \in E(H) \setminus (\bigcup_{i\in [q]} E(T_i))$ if the resulting graph remains connected upon throwing $e_j$ away. Let $H'$ be the graph at the end of considering all the edges in the order $\mathcal{O}$. We prove that $H'$ must be a tree. Suppose for the sake of contradiction, there is a cycle $C$ as a subgraph of $H'$. As for each $i\in [q]$, $T_i$ is a tree and for each $i \neq i' \in [q]$, $V(T_i) \cap V(T_{i'}) = \emptyset$, there must be an edge from $E(H) \setminus (\bigcup_{i\in [q]} E(T_i))$ in $E(C)$. Consider the edge $e \in (E(H) \setminus (\bigcup_{i\in [q]} E(T_i))) \cap E(C)$ with the largest index according to $\mathcal{O}$. This edge was throwable as $C \setminus \{e\}$ ensured any connectivity due to $e$. Thus, there can be no cycle in $H'$ and it is a spanning tree of $V(H)$. This implies that $T=H'$ is a $k$-Steiner tree for $G$, $S$ being the set of at most $k$ Steiner vertices, such that for each $i \in [q]$, $T_i$ is a subtree of $T$. 

 Finally, we show that if a $k$-Steiner tree $T$ exists then $q \leq \compconst k$. Let $f$ be a representation of the clique-grid graph $G$. Note that for any cell $(a,b)$ $f^{-1}(a,b)$ is a clique, Therefore, there can be at most one component $C_i$ intersecting with a cell $(a,b)$. By property $(2)$ of Definition~\ref{def:clique-grid}, there are at most $\cellconst$ cells that can have neighbours of any vertex in $(a,b)$. Thus, for any Steiner vertex, there can be at most $\compconst$ components of $G[R]$ it can have neighbours in. Putting everything together, if there are at most $k$ Steiner vertices that are used to connect the $q$ connected components of $G[R]$ and each Steiner vertex can have neighbours in at most $\compconst$ components, then it must be that $q \leq \compconst k$.
  \end{proof}

Henceforth, we wish to find a solution $k$-Steiner tree $T$ such that for each $i \in [q]$, $T_i$ is a subtree of $T$. 

\begin{definition}\label{def:contract}
  Let $G$ be a clique-grid graph with the terminal set $R$. Let $C_1,C_2,\ldots,C_q$ be the connected components of $G[R]$, and for each $i \in [q]$ let $T_i$ be a spanning tree for each $C_i$. Let $G^*$ be the following graph: $V(G^*) = V(G \setminus R) \cup R^*$ where $R^* = \{c_i \vert i \in [q]\}$, $E(G^*) = \{v_1v_2 \vert v_1,v_2 \in V(G) \setminus R\} \cup \{vc_i \vert v \in V(G) \setminus R, \exists u \in C_i \mbox{ s.t } vu\in E(G) \}$. $G^*$ is called the component contracted graph of $G$ and $\{c_i\vert i \in [q]\}$ is the set of terminals for $G^*$ (See Figure~\ref{FPT-contraction}).
\end{definition}

\begin{figure}[h]
\centering
\includegraphics[scale=.8]{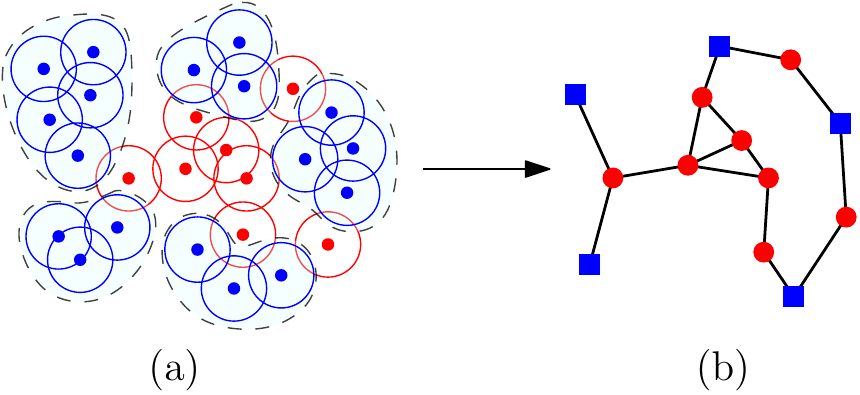}
\caption{An illustration of the component contraction; (a) red disks are Steiners and blue disks are terminals; (b) red vertices are Steiner vertices and blue vertices are contracted terminal components.}
\label{FPT-contraction}
\end{figure}

Note that $G^*$ may no longer be a clique-grid graph. From the definition of a component contracted graph and Observation~\ref{obs:terminal-comps}, we have the following observation.

\begin{observation}\label{obs:contract}
 Let $G$ be a clique-grid graph with the terminal set $R$. Let $C_1,C_2,\ldots,C_q$ be the connected components of $G[R]$, and for each $i \in [q]$ let $T_i$ be a spanning tree for each $C_i$. Let $G^*$ be the component contracted graph of $G$ using the $T_i$'s. Then $(G,R,t,k)$ is a yes-instance of \stree if and only if $q \leq \compconst k$ and $(G^*,R^*,q,k)$ is a yes-instance of \stree.
\end{observation}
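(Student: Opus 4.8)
The plan is to prove Observation~\ref{obs:contract} as a biconditional, establishing the equivalence of the two instances by exhibiting a correspondence between their solution Steiner trees. The bound $q \leq \compconst k$ is already handled by the second part of Observation~\ref{obs:terminal-comps}, so I would first dispense with it: if a $k$-Steiner tree for $(G,R,t,k)$ exists, that observation immediately yields $q \leq \compconst k$, and conversely if $q > \compconst k$ then no $k$-Steiner tree can exist, so both sides of the stated biconditional are false and the claim holds trivially. Thus the substantive work is to show, under the assumption $q \leq \compconst k$, that $(G,R,t,k)$ is a yes-instance of \stree if and only if $(G^*,R^*,q,k)$ is a yes-instance.

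For the forward direction, suppose $(G,R,t,k)$ has a $k$-Steiner tree. By Observation~\ref{obs:terminal-comps} I may assume it is a $k$-Steiner tree $T$ in which each $T_i$ is a subtree of $T$. The idea is to contract each $T_i$ (equivalently each component $C_i$) to the single terminal $c_i$ of $G^*$. I would argue that this contraction applied to $T$ yields a connected subgraph of $G^*$ spanning all of $R^*$ and using exactly the same set $S = V(T)\setminus R$ of Steiner vertices; every edge of $T$ with an endpoint inside $C_i$ and the other in a Steiner vertex $v$ maps to the edge $vc_i$ of $G^*$, which exists by Definition~\ref{def:contract}, and edges internal to each $T_i$ disappear under contraction. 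A spanning tree of this connected subgraph is then a Steiner tree for $(G^*,R^*,q,k)$ with at most $\lvert S\rvert \leq k$ Steiner vertices, giving the desired yes-instance.

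For the reverse direction, suppose $(G^*,R^*,q,k)$ has a $k$-Steiner tree $T^*$ with Steiner set $S^* \subseteq V(G)\setminus R$ of size at most $k$. I would reverse the contraction: take $G[S^* \cup R]$ together with the trees $T_i$, and show this subgraph is connected. Each edge $vc_i \in E(T^*)$ certifies the existence of a concrete edge $vu$ with $u \in C_i$ in $G$, so re-expanding each $c_i$ into $C_i$ (spanned internally by $T_i$) reconnects everything $T^*$ had connected; formally, $H = G[S^*\cup R]$ contains $T^*$'s connectivity after uncontracting, so $H$ is connected and contains all of $R$. A spanning tree of $H$ is then a $k$-Steiner tree for $(G,R,t,k)$ using the Steiner set $S^*$.

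The main obstacle, and the step requiring the most care, is verifying that the edge set of $G^*$ is rich enough to faithfully transport connectivity in both directions, and in particular that contracting a component does not merge Steiner-vertex connectivity in unintended ways or lose it. The crux is the precise definition of $E(G^*)$: an edge $vc_i$ exists exactly when some vertex of $C_i$ is adjacent to $v$ in $G$, so a single edge in $G^*$ can stand in for a whole bundle of $G$-edges into $C_i$, and I must make sure the spanning-tree extraction in each direction uses exactly the Steiner budget $k$ and no more. This is essentially a routine but careful application of the contraction/uncontraction correspondence, leaning on the already-proved Observation~\ref{obs:terminal-comps} to guarantee a solution in which the $T_i$ appear as subtrees, which is what makes the contraction reversible without introducing new Steiner vertices.
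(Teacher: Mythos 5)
Your proof is correct and follows exactly the route the paper intends: the paper states this observation without a written proof, asserting it follows ``from the definition of a component contracted graph and Observation~\ref{obs:terminal-comps},'' and your contraction/uncontraction correspondence (with Observation~\ref{obs:terminal-comps} supplying both the bound $q \leq 24k$ and the normalized solution containing each $T_i$) is precisely that argument spelled out.
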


Now we are ready to design our FPT algorithm for \stree on clique-grid graphs parameterized by $k$ and complete the proof of Lemma~\ref{lem:FPT}.

 \begin{proof}[Proof of Lemma~\ref{lem:FPT}]
 Let $(G,R,t,k)$ be an input instance of 
 $n$-vertex clique-grid graphs. Let $C_1,C_2,\ldots,C_q$ be the connected components of $G[R]$, and for each $i \in [q]$ let $T_i$ be a spanning tree for each $C_i$. Let $G^*$ be the component contracted graph of $G$ using the $T_i$'s. Let $R^* = \{c^i\vert i \in [q] \}$ be the terminal set of $G^*$. By, Observation~\ref{obs:terminal-comps}, if $G$ is a yes-instance then it must be that $q \leq \compconst k$. If this is not the case, then we immediately output no. 
 
 From now on, we are in the case $q \leq \compconst k$. By Observation~\ref{obs:contract}, it is enough to determine whether $(G^*,R^*,q,k)$ is a yes-instance of \stree. As noted earlier, $G^*$ may no longer be a clique-grid graph. 
 
 We run the Dreyfus-Wagner algorithm \cite{dreyfus1971steiner} which returns a minimum edge-weighted Steiner tree connecting $R^*$ in $G^*$. Since $G^*$ is unweighted, the returned solution Steiner tree $T$ has the minimum number of edges. Note that since $G^*$ is unweighted, a Steiner tree for $R^*$ minimizes the number of Steiner vertices if and only if it has minimum number of edges. The total number of Steiner vertices in $T$ is $\vert V(T) \vert - \vert R^*\vert$. If $\vert V(T) \vert - \vert R^*\vert \leq k$, then our algorithm returns that $(G^*,R^*,q,k)$ is a yes-instance of \stree, and otherwise it returns no.
 
The construction of $G^*$ is done in polynomial time. Since $q \leq \compconst k$, the Dreyfus-Wagner algorithm runs in $2^{O(k)}n^{O(1)}$. Thus, our algorithm also has running time $2^{O(k)}n^{O(1)}$. 
 \end{proof}

\section{W[1]-Hardness for \stree on Disk Graphs}\label{sec:whard}

In this section, we consider the \stree problem on disk graphs and prove that this problem is W[1]-hard parameterized by the number Steiner vertices $k$. 


\whardness*
\begin{proof}
We prove Theorem~\ref{w1-hard} by giving a parameterized reduction from the \textsc{Grid Tiling with}  $\ge$ problem which is known to be W[1]-hard\footnote{$k\times k$ \textsc{Grid Tiling with} $\ge$ problem is W[1]-hard, assuming ETH, cannot be solved in $f(k)n^{o(k)}$ for any function $f$} \cite{cygan2015parameterized}. 
In the \textsc{Grid Tiling with} $\ge$ problem,
we are given an integer n, a $k\times k$ matrix for an integer $k$ and a set of pairs $S_{ij}\subseteq [n]\times [n]$ of each cell. The objective is to find, for each $1\le i,j\le k$, a value $s_{ij}\in S_{ij}$ such that if $s_{ij}=(a,b)$ and $s_{i+1,j}=(a',b')$ then $a\ge a'$; if $s_{ij}=(a,b)$ and $s_{i,j+1}=(a',b')$ then $b\ge b'$. 

Let $I=(n,k,\mathcal{S})$ be an instance of the \textsc{Grid Tiling with} $\ge$.
We construct a set of unit disks $D$, that is divided into three sets of unit disks $D_1, D_2, D_3$; $D=D_1\uplus D_2\uplus D_3$. 
Each disk in $D_1, D_2, D_3$ is of radius $1$, $\delta$ and $\kappa$, respectively. We will define the value of $\delta$ and $\kappa$ shortly. 
The construction of the set $D=D_1\uplus D_2\uplus D_3$ will ensure that $D$ contains a \stree with $k^2$ Steiner vertices if and only if $I$ is a yes instance of \textsc{Grid Tiling with} $\ge$. Let $\epsilon = 1/n^{10}$, and $\delta =\epsilon/4$. Here, we point out that the value of $\kappa, \epsilon$ are independent of each other.
First, we move the cells away from each other, such that the horizontal (resp. vertical) distance between the left columns (resp. top rows) any two consecutive cell is $2+\epsilon$. 
Let 
$100\delta$ be the side of length of each cell. Then, we introduce diagonal chains of terminal disks into $D_3$ of radius $\kappa=\sqrt{2}(2+\epsilon - 100\delta)/1000$ to connect the cells diagonally; see Figure~\ref{W-hard-connected}(a). For every $1\le x, y\le k$, and every $(a,b)\in S(x,y)\subseteq [n]\times [n]$,
we introduce into $D_1$ a disk of radius $1$ centered at $(2x+\epsilon x+\epsilon a, 2y+\epsilon y+\epsilon b)$.
Let $D[x,y]\subseteq D_1$ be the set of 
disks introduced for a fixed $x$ and $y$, and notice that they mutually intersect each other. 
Next, for $1\le x, y\le k$, we introduce into $D_2$,  disks of radius $\delta$ between consecutive cells of coordinate $(2x+1+\epsilon x+\epsilon a, 2y+\epsilon y)$ (placed horizontally); and $(2x+\epsilon x, 2y+1+\epsilon y+\epsilon b)$ (placed vertically). For every cell $S[x,y]$,
we denote the top, bottom, left, right cluster of terminal disks of radius $\delta$ from $D_2$ by $L[x,y], R[x,y], T[x,y], B[x,y]$, respectively. 
Moreover, for each cell $S[x,y]$, we introduce a disk of radius $\delta$ 
at a coordinate that is completely inside the rectangle  bounding the centres of disks in $D[x,y]$. This is to enforce that at least one disk is chosen form each $D[x,y]$. 
See Figure~\ref{W-hard-connected}(b) for an illustration. 

\begin{figure}[h]
\centering
\includegraphics[scale=.8]{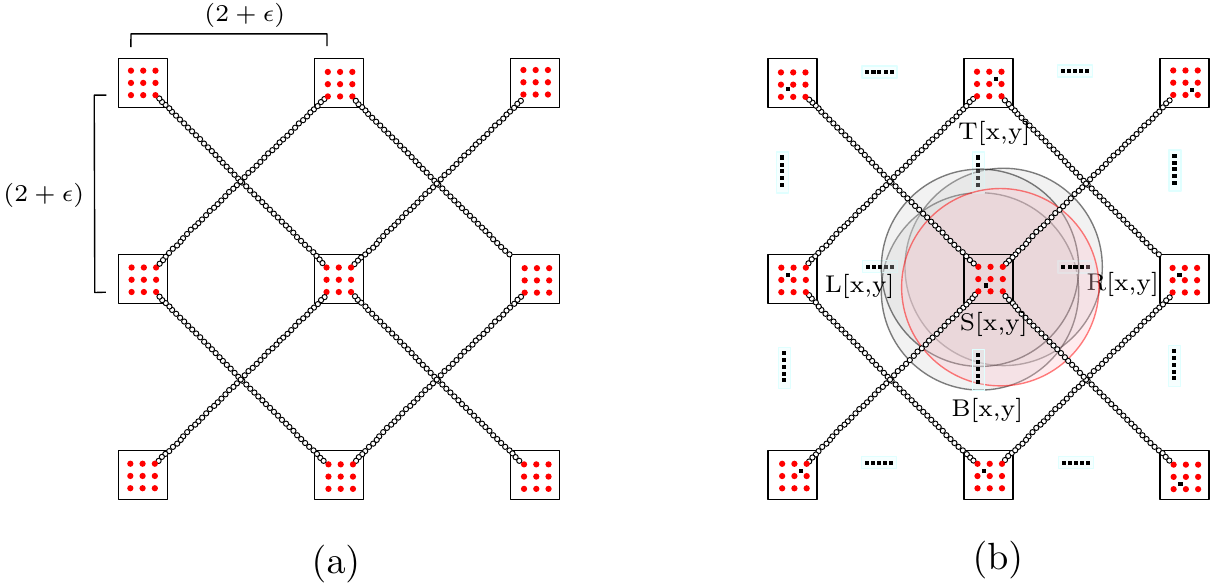}
\caption{(a) The schematic diagram of the cells, after adjusting the distance between adjacent cells which is $2+\epsilon$. The red disks inside each cells, are the coordinates where the center of the Steiner disk of radius~$1$  will be placed. The diagonal chains consisting of terminal disks of radius $\kappa$, are connecting the cells diagonally. (b) The small black dots inside each cell are extra terminals of radius $\delta$. 
Consider a cell $S[x,y]$. The shaded grey disks are the potential disks and the shaded red disk is chosen in the solution from $D[x,y]$.}
\label{W-hard-connected}
\end{figure}

We proceed with the following observation. 
Consider a disk $p$ that is centered at 
$(2x+\epsilon x+\epsilon a, 2y+\epsilon y+\epsilon b)$ for some $(a,b)\in [n]\times [n]$. Now, consider a disk $q$ from $R[x,y]$ centered at $(2x+1+\epsilon x+\epsilon a, 2y+\epsilon y)$. The distance between their centers are $\sqrt{1+\epsilon^2 b^2}$. 
We need to show that this is less than $(1+\epsilon/4)$. 
This is true because $1+\epsilon^2 b^2$ is less than $(1+\epsilon/4)$ as the value of $b$ goes to $n$, $\epsilon = 1/n^{10}$ and the value of $n$ is large. Hence, $q$ is covered by the disk $p$ from $S[x,y]$ centered at $(a,b)$. Next, consider a disk $q'$ from $R[x,y]$ centered at $(2x+1+\epsilon x+\epsilon (a+1),  2y+\epsilon y)$. The distance between their centers are $\sqrt{(1+\epsilon)^2+\epsilon^2 b^2}$. We show that this value is bigger than $(1+\epsilon/4)$. This means $(1+\epsilon)^2+\epsilon^2 b^2$ is bigger than 
$(1+\epsilon/4)^2$. As the value of $b$ goes to $n$, it is not hard to see the left side is bigger since $\epsilon=1/n^{10}$ and the value of $n$ is large. 
Therefore, $q'$ is not covered by the disk $p$ from $S[x,y]$ centered at $(a,b)$. The same calculation holds for $L[x,y]$, $T[x,y]$
and $B[x,y]$.

In the forward direction, let the pairs $s[x,y]\in S[x,y]$ form a solution for instance $I$, and let $s[x,y]=(a[x,y],b[x,y])$.
For every $1\le x,y\le k$, 
we select the disk $d[x,y]$ from $D_1$ of radius $1$ centered at 
$(2x+\epsilon x+\epsilon a[x,y], 2y+\epsilon y+\epsilon b[x,y])$. 
We have seen in the previous paragraph that this disk cover any disk from $R[x,y]$ of center with $(2x+1+\epsilon x+\epsilon a[x,y], 2y+\epsilon y)$ but does not covers disks with coordinate $(2x+1+\epsilon x+\epsilon (a[x,y]+1), 2y+\epsilon y)$. Similarly, this holds for $L[x,y],T[x,y],B[x,y]$. $s[x,y]$'s forms a solution of $I$, then we have $a[x,y]\ge a[x+1,y]$.  Therefore, the disks $d[x,y]$ and $d[x+1,y]$ will cover all disks from $R[x,y]$. Similarly, we have 
$b[x,y]\ge b[x,y+1]$ which implies that $d[x,y]$ and $d[x,y+1]$ will cover $T[x,y]$ and form a component them. Now, the diagonals chains consisting of terminal disks of radius $\kappa$, we have taken to join the cells (see Figure~\ref{W-hard-connected}(a)) ensures that all cells are connected. Moreover, we have shown that if $s[x,y]$'s form a solution of instance $I$, then all terminals in $L[x,y], R[x,y], T[x,y], B[x,y]$ (for any $1\le x, y\le k$) are covered. Therefore, this will form a connected Steiner tree with $k^2$ many Steiner disks.

In the reverse direction, 
let $D'\subseteq D_1$ be a set of $k^2$ Steiner disks that spans over all terminals in $D_2\cup D_3$.  This is true when for every $1\le x,y\le k$, the set $D'$ contains a disk $d[x,y]\in D[x,y]$ that is centered at $(2x+\epsilon x+\epsilon a[x,y], 2y+\epsilon y+\epsilon b[x,y])$ for some $(a[x,y],b[x,y]) \in [n]\times [n]$. Indeed, we are required to choose one disk from $D[x,y]$ due to the reason that there is a terminal disk lying inside the rectangle bounding the centres of disks in $D[x,y]$. The claim is that $s[x,y]=(a[x,y],b[x,y])$'s form a solution of $I$. First of all, $d[x,y]\in D[x,y]$ implies that $s[x,y]\in S[x,y]$. Consider a cell $S[x,y]$. We have observed that it covers disk $q$ from $R[x,y]$ centered at $(2x+1+\epsilon x+\epsilon a, 2y+\epsilon y)$, but a disk $q'$ from $R[x,y]$
centered at $(2x+1+\epsilon x+\epsilon (a+1),  2y+\epsilon y)$ is not covered. 
This is true for $L[x,y], T[x,y], B[x,y]$. Hence, if all terminals points from inside $S[x,y]$'s and $L[x,y], R[x,y], T[x,y], B[x,y]$ are covered by $k^2$ many Steiner disks, it would imply that  $a[x,y]\ge a[x+1,y]$ and $b[x,y]\ge b[x,y+1]$.
Therefore, $s[x,y]$'s form the solution for \textsc{Grid Tiling with} $\ge$ instance $I$. This completes the proof. 
\end{proof}

\paragraph*{Conclusion}
In this paper we studied the parameterized complexity of \stree on unit disk graphs and disk graphs under the parameterizations of $k$ and $t+k$. In future, we wish to explore tight bounds for the algorithms we have obtained and to probe into kernelization questions under these parameters. It would also be interesting to consider the minimum weight of a solution $k$-Steiner tree as a parameter. A variant of \stree that usually is easier to study is {\sc Full Steiner Tree}. However, in the case of unit disk graphs this problem proved to be very resilient to all our algorithmic strategies. We wish to explore {\sc Full Steiner Tree} on unit disk graphs under natural and structural parameters in future works.


\bibliography{main.bib}


\end{document}